\documentclass[11pt]{article}

\usepackage[T1]{fontenc}
\usepackage{lmodern}
\usepackage[margin=1in]{geometry}
\usepackage{amsmath,amssymb,amsthm,mathtools}
\usepackage{bm}
\usepackage{array}
\usepackage{booktabs}
\usepackage{graphicx}
\usepackage{tikz}
\usepackage{xcolor}
\usepackage{etoolbox}
\usepackage{placeins}
\usepackage{needspace}
\usepackage{hyperref}
\usepackage{bookmark}
\usepackage[nameinlink,capitalize,noabbrev]{cleveref}
\usetikzlibrary{arrows.meta,positioning}

\hypersetup{
  unicode=true,
  pdfencoding=auto,
  psdextra=true,
  colorlinks=true,
  linkcolor=blue!60!black,
  citecolor=blue!60!black,
  urlcolor=blue!60!black
}
\theoremstyle{definition}
\newtheorem{theorem}{Theorem}
\newtheorem{proposition}{Proposition}
\newtheorem{lemma}{Lemma}
\newtheorem{remark}{Remark}
\newtheorem{definition}{Definition}

\newcommand{\theoremqedsymbol}{\ensuremath{\square}}
\AtEndEnvironment{theorem}{\hfill\theoremqedsymbol}
\AtEndEnvironment{proposition}{\hfill\theoremqedsymbol}
\AtEndEnvironment{lemma}{\hfill\theoremqedsymbol}
\AtEndEnvironment{definition}{\hfill\theoremqedsymbol}
\AtEndEnvironment{remark}{\hfill\theoremqedsymbol}
\AtEndEnvironment{example}{\hfill\theoremqedsymbol}
\AtEndEnvironment{discussion}{\hfill\theoremqedsymbol}
\crefname{discussion}{discussion}{discussions}
\Crefname{discussion}{Discussion}{Discussions}

\newcommand{\F}{\mathbb{F}}
\newcommand{\AZ}{A_Z}
\newcommand{\AX}{A_X}
\newcommand{\AD}{A_\Delta}
\newcommand{\CZ}{C_Z}
\newcommand{\CX}{C_X}
\newcommand{\Row}{\operatorname{Row}}
\newcommand{\Ker}{\operatorname{Ker}}
\newcommand{\vf}{\bm{f}}
\newcommand{\vg}{\bm{g}}
\newcommand{\ve}{\bm{e}}
\newcommand{\vt}{\bm{t}}

\title{Spatially Coupled MacKay--Neal/Hsu--Anastasopoulos CSS Codes Achieve the Quantum-Erasure Hashing Bound by Seeded BP Decoding}
\author{Kenta Kasai\\Institute of Science Tokyo\\\texttt{kenta@ict.eng.isct.ac.jp}}
\date{}

\begin{document}
\maketitle

\begin{abstract}
We study hard-erasure belief-propagation (BP), understood as the sum-product algorithm on erasure messages, for a punctured MacKay--Neal/Hsu--Anastasopoulos (MN/HA) representation of Calderbank--Shor--Steane (CSS) codes.  For every integer degree triple $2\leq j_Z<j_X<k\leq60$, we prove that the constituent potential thresholds are $j_Z/k$ and $1-j_X/k$.  The proof combines classical HA--MN duality with a completed exact rational certificate for all 1711 MN degree pairs in this range.  The certificate establishes strict positivity on every physical nontrivial fixed-point branch, at every erasure probability in $[0,1]$, and has an independent integer-arithmetic verifier.  A reduction of the Z-side recursion and a fixed-channel vector-potential argument then prove seeded density-evolution convergence below the smaller constituent threshold for sufficiently large coupling width and an ideal seed interval at least that wide.  Under $j_Z+j_X=k$, this threshold equals the quantum-erasure hashing-bound parameter determined by the CSS design rate.  In particular, the degree triple $(4,8,12)$ has potential threshold $1/3$ without a positivity hypothesis.  The numerical example at erasure probability $0.3325$ reproduces the resulting decoding waves.  The theorem uses an ideal auxiliary-message seed; its finite-code realization and logical block-error convergence are separate questions.

\end{abstract}

\section{Introduction}

Calderbank--Shor--Steane (CSS) codes are a standard way to build quantum stabilizer codes from an orthogonality relation between two classical linear codes \cite{CalderbankShor1996,Steane1996}.  On the classical side, low-density parity-check (LDPC) codes make local graph-based decoding possible \cite{Gallager1962,RichardsonUrbanke2008}.  The MacKay--Neal (MN) and Hsu--Anastasopoulos (HA) ensembles use punctured sparse representations to obtain bounded-degree graphical constructions with capacity-oriented decoding properties \cite{MacKayNeal1996,HsuAnastasopoulos2010}.

Quantum LDPC code construction has developed in parallel with the classical sparse-graph coding literature.  Hypergraph-product codes gave positive-rate quantum LDPC families with distance proportional to the square root of the block length \cite{TillichZemor2014}, and lifted-product constructions later gave asymptotically good quantum LDPC codes \cite{PanteleevKalachev2022}.  Spatial coupling has also been studied directly for quantum LDPC codes, including spatially coupled quasi-cyclic quantum LDPC codes, entanglement-assisted spatially coupled quantum LDPC codes, and algebraic spatially coupled quantum LDPC constructions \cite{HagiwaraKasaiImaiSakaniwa2011,AndriyanovaMauriceTillich2012,YangCalderbank2025}.  These construction results motivate decoding analyses for CSS codes, but they do not by themselves provide a density-evolution (DE) theorem for the punctured MN/HA representation considered here.

Before the terminology of spatial coupling became standard, the same construction principle appeared in the study of low-density parity-check (LDPC) convolutional codes: periodic low-density convolutional parity-check matrices were introduced in \cite{FelstromZigangirov1999}, and terminated LDPC convolutional ensembles were shown by density evolution to have belief-propagation (BP) thresholds close to capacity in \cite{LentmaierSridharanZigangirovCostello2005,LentmaierSridharanCostelloZigangirov2010}.  Protograph-based LDPC convolutional and spatially coupled LDPC ensembles were then developed as a structured design framework with distance and threshold analysis \cite{MitchellPusaneZigangirovCostello2008,MitchellLentmaierCostello2015}.  Spatial coupling was also formulated in the classical LDPC setting as a method that lets BP decoding attain the maximum-a-posteriori (MAP) or area-threshold performance of the uncoupled system \cite{KudekarRichardsonUrbanke2011}.  The vector-potential method extends the same proof strategy to DE systems with multiple message types \cite{YedlaVector2012}.  For MN/HA-type ensembles, prior work proceeds from threshold improvement and asymptotic analysis on the binary erasure channel (BEC), to capacity-achieving bounded-degree cases, to symmetric-information-rate (SIR) achievability over generalized erasure channels (GECs) \cite{KasaiSakaniwa2011,MitchellKasaiLentmaierCostello2012,ObataJianKasaiPfister2013,OkazakiKasai2014,FukushimaOkazakiKasai2015}.  Thus, from the classical viewpoint, the natural expectation is that spatial coupling should expose the MAP performance of MN/HA-type systems to BP decoding.

This expectation has a direct CSS analogue.  If the Z-side and X-side MN/HA constituents have MAP thresholds matching their respective capacity or area limits, then using those constituents in a CSS sparse representation suggests that spatial coupling should make seeded BP decoding reach the quantum-erasure hashing-bound parameter.  The two Pauli components share the same erasure locations, but the hard-erasure decoder considered here does not exchange information between the two constituents.  The five-message description therefore consists of separate classical erasure recursions.  The CSS-specific requirement is their compatibility with the nested sparse representation.  The positivity required for these classical constituents is proved below for every $2\leq j_Z<j_X<k\leq60$ by exact rational verification and HA--MN duality.  The resulting seeded DE theorem has no unproved positivity hypothesis.

The finite-degree CSS construction in \cite{Kasai2026FiniteDegree} uses MN/HA-type punctured sparse representations as nested structures for building CSS pairs.  Its minimum-distance result is an existence statement for the finite random code ensemble: with suitable finite degrees, the resulting CSS codes have positive relative minimum distance, and the achievable rate--distance tradeoff reaches the quantum Gilbert--Varshamov benchmark.  This distance statement controls the code family itself, but it does not specify an iterative erasure decoder or its threshold.  This paper isolates the hard-erasure decoding problem associated with the same sparse representation over the quantum erasure channel.  On erased coordinates, the two Pauli components remain unresolved, so the Z-side and X-side sparse decoding problems are described by a five-message recursion driven by the same erasure probability.

We derive the uncoupled and position-dependent recursions, identify their constituent potentials, and prove convergence for ideal seeded coupling at the stated degrees.  The Z-side auxiliary channel message is eliminated before applying the vector-potential argument.  The finite bound $k\leq60$ is part of the proved theorem and includes every degree example in this paper.  The verification covers all intermediate degree pairs, not only the plotted examples.

\section{Sparse CSS Erasure Models}\label{sec:model}

This section first fixes the CSS notation and the quantum-erasure decoding model, and then defines the ensemble through punctured sparse representations.  We begin by briefly reviewing the finite-degree CSS construction of \cite{Kasai2026FiniteDegree}, keeping only the notation needed for the later decoding analysis.  All vectors are column vectors, and \(\bm0\) denotes the zero column vector of the context-dependent length.  We write \(\Row(M)\) for the row span of \(M\), represented as a subspace of column vectors.

Start from a general CSS code specified by two binary check matrices
\[
      H_X\in\F_2^{m_X\times N},\qquad
      H_Z\in\F_2^{m_Z\times N},\qquad
      H_XH_Z^T=\bm{0}.
\]
Rows of \(H_X\) are interpreted as X-type stabilizers, and rows of \(H_Z\) as Z-type stabilizers.  Write a binary Pauli error as
\((\ve_X,\ve_Z)\in\F_2^N\times\F_2^N\).  The measured syndromes are
\[
      \bm{\sigma}_X=H_X\ve_Z,\qquad
      \bm{\sigma}_Z=H_Z\ve_X .
\]

The decoder is given the erasure state \(S=(S_1,\ldots,S_N)\).  If \(S_i=0\), the coordinate is known and \((e_{X,i},e_{Z,i})=(0,0)\).  If \(S_i=1\), the erased coordinate leaves \(e_{X,i}\) and \(e_{Z,i}\) unresolved; equivalently, the erased qubit is represented by the four possibilities \(I,X,Y,Z\).  Quantum-erasure decoding has been studied for surface codes, color codes, hypergraph product codes, and more general quantum LDPC codes using linear-time maximum-likelihood decoding, trimming, fast erasure decoding, belief propagation with guided decimation, cluster decomposition, degeneracy-aware BP, stabilizer-assisted inactivation, and quantum Maxwell erasure decoding \cite{DelfosseZemor2020,LeeMhallaSavin2020,ConnollyLondeLeverrierDelfosse2024,GokdumanYaoPfister2024,YaoGokdumanPfister2025,KuoOuyang2026,PechGokdumanYaoPfister2026,FreireLeRegentLeverrier2026}.  The deterministic density-evolution recursion associated with this model is introduced in \Cref{sec:uncoupled-joint}.

The decoding objective is to recover an estimate \((\widehat{\ve}_X,\widehat{\ve}_Z)\) from \((\bm{\sigma}_X,\bm{\sigma}_Z)\) and \(S\) with the same syndrome, such that the residual is CSS-stabilizer equivalent to the true error:
\[
      \widehat{\ve}_X+\ve_X\in\Row(H_X),\qquad
      \widehat{\ve}_Z+\ve_Z\in\Row(H_Z).
\]
For a hard-erasure decoder in these models, the basic state is whether the relevant visible component has been resolved or remains unknown.  Here and throughout the paper, a visible component means an unpunctured component that remains as a physical CSS coordinate, in contrast to the hidden auxiliary components that are punctured in the sparse representation.

The next definitions fix the random matrix ensembles used in the sparse representation.  We state both the uncoupled and tail-biting spatially coupled versions so that the finite ensembles are specified independently of any decoding dynamics.

\begin{definition}[\((j,k,M)\)-regular random matrix]\label{def:regular-random-matrix}
Let \(j,k,M\) be positive integers with \(M\geq k\) such that \(k\) divides \(jM\), and put \(m=jM/k\).  A \((j,k,M)\)-regular random matrix is a binary matrix
\[
      A\in\F_2^{m\times M}
\]
drawn uniformly from the set of matrices with every column of weight \(j\) and every row of weight \(k\).  Equivalently, it is the regular specialization of the standard socket ensemble \(\mathrm{LDPC}(\Lambda,P)\) in \cite[Def. 3.15]{RichardsonUrbanke2008}, with \(M\) variable nodes of degree \(j\) and \(m\) check nodes of degree \(k\), conditioned on having no parallel edges.
\end{definition}

This first ensemble is the uncoupled building block.  The next definition
records the corresponding tail-biting coupled block matrix.  It is the
exact-socket version of the \((\ell,r,L,w)\) spatially coupled ensemble in
\cite{KudekarRichardsonUrbanke2011}.

\begin{definition}[\((j,k,M,L,w)\)-spatially coupled regular random matrix]\label{def:sc-regular-random-matrix}
Let \(j,k,M,L,w\) be positive integers such that \(1\leq w<L\), \(k\mid jM\), and \(w\mid jM\), and put \(m=jM/k\).  Assume that the simple socket configurations specified below form a nonempty set.  A \((j,k,M,L,w)\)-spatially coupled regular random matrix is a tail-biting binary matrix
\[
      A^{\mathrm{SC}}\in\F_2^{Lm\times LM}
\]
with variable sections \(i\in\mathbb Z/L\mathbb Z\), each containing \(M\) columns, and check sections \(c\in\mathbb Z/L\mathbb Z\), each containing \(m\) rows.  In each variable section, uniformly at random partition the \(jM\) variable sockets into \(w\) groups of size \(jM/w\), indexed by offsets \(s=0,\ldots,w-1\).  Independently, in each check section, uniformly at random partition the \(km=jM\) check sockets into \(w\) groups of the same size.  For every section \(i\) and offset \(s\), match uniformly the \(s\)-th variable-socket group of section \(i\) to the \(s\)-th check-socket group of section \(i+s\) modulo \(L\), conditioned on simplicity.  The resulting matrix has column weight \(j\), row weight \(k\), and nonzero blocks only between variable section \(i\) and check sections \(i,\ldots,i+w-1\) modulo \(L\).
\end{definition}

The following ensemble builds the visible CSS pair from an MN/HA-type sparse representation with punctured auxiliary coordinates.  The extended matrices are sparse and define the local constraints of the representation.  The visible-coordinate matrices \(H_X,H_Z\), however, are generally dense after puncturing.

\begin{definition}[Nested regular ensemble through punctured sparse representations]\label{def:ensemble}
Fix integers \(j_Z,j_X,k\) with \(1\leq j_Z<j_X<k\).  Let \(N\geq k\) be a block parameter chosen so that the row counts below are integral.  Draw a \((j_Z,k,N)\)-regular random matrix
\[
      \AZ\in\F_2^{m_Z\times N},
      \qquad m_Z=\frac{j_Z}{k}N,
\]
in the sense of \Cref{def:regular-random-matrix}.  Independently draw a \((j_X-j_Z,k,N)\)-regular random matrix
\[
      \AD\in\F_2^{m_\Delta\times N},
      \qquad m_\Delta=\frac{j_X-j_Z}{k}N,
\]
and set
\[
      \AX=\begin{bmatrix}\AZ\\ \AD\end{bmatrix}.
\]
Then \(\Row(\AZ)\subseteq\Row(\AX)\), \(\AX\in\F_2^{m_X\times N}\), \(m_X=m_Z+m_\Delta=(j_X/k)N\), and each column of \(\AX\) has weight \(j_X\).  This is a regular degree profile with a prescribed row partition; \(\AX\) is not a uniform draw from the unpartitioned regular-matrix ensemble.  Independently of both A-blocks, draw a \((k,k,N)\)-regular random matrix
\[
      B\in\F_2^{N\times N}
\]
in the sense of \Cref{def:regular-random-matrix}.

The rightmost \(N\) coordinates are the visible variable \(\bm v\in\F_2^N\), and the left block consists of hidden variables that are punctured.  As in \cite{Kasai2026FiniteDegree}, define the extended sparse parity-check matrices
\begin{equation}
      H'_Z=
      \begin{bmatrix}
        \AZ & 0\\
        B   & I_N
      \end{bmatrix},
      \qquad
      H'_X=
      \begin{bmatrix}
        \AX^T & B^T
      \end{bmatrix}.
      \label{eq:extended-sparse-matrices}
\end{equation}
The left hidden variable is \(\bm u\in\F_2^N\) for \(H'_Z\) and \(\bm w\in\F_2^{m_X}\) for \(H'_X\).  The visible codes are obtained by puncturing the left hidden-variable part of the kernels:
\begin{equation}
\begin{aligned}
      \CZ
      &=\{\bm v\in\F_2^N:\exists\,\bm u\in\F_2^N,\
          H'_Z\begin{bmatrix}\bm u\\ \bm v\end{bmatrix}=\bm{0}\},\\
      \CX
      &=\{\bm v\in\F_2^N:\exists\,\bm w\in\F_2^{m_X},\
          H'_X\begin{bmatrix}\bm w\\ \bm v\end{bmatrix}=\bm{0}\}.
\end{aligned}
      \label{eq:punctured-visible-codes}
\end{equation}
Equivalently,
\[
      \CZ=B(\Ker \AZ),\qquad
      \CX=\{\bm v\in\F_2^N:B^T\bm v\in\Row(\AX)\}.
\]
\end{definition}

A tail-biting spatially coupled sparse representation is obtained by applying
\Cref{def:sc-regular-random-matrix} to each construction matrix.  The next
definition specifies only the finite sparse representation.  The seeded
boundary condition is a separate DE object defined in
\Cref{def:seeded-sc-de} and applied in \Cref{thm:joint-saturation}.

\begin{definition}[Spatially coupled punctured sparse representation]\label{def:sc-punctured-ensemble}
Fix positive integers \(j_Z,k_Z,j_\Delta,k_\Delta,k_B,M,L,w\).  Assume that
\[
      k_Z\mid j_ZM,\qquad k_\Delta\mid j_\Delta M,\qquad
      w\mid j_ZM,\qquad w\mid j_\Delta M,\qquad w\mid k_BM .
\]
Put
\[
      n=LM,\qquad
      m_Z=\frac{j_ZM}{k_Z},\qquad
      m_\Delta=\frac{j_\Delta M}{k_\Delta},\qquad
      m_X=m_Z+m_\Delta .
\]
Draw independently
\[
      \AZ^{\mathrm{SC}}\sim(j_Z,k_Z,M,L,w),\qquad
      \AD^{\mathrm{SC}}\sim(j_\Delta,k_\Delta,M,L,w),\qquad
      B^{\mathrm{SC}}\sim(k_B,k_B,M,L,w),
\]
in the sense of \Cref{def:sc-regular-random-matrix}, taking \(1\leq w<L\) and a section size for which all three simple ensembles are nonempty, and set
\[
      \AX^{\mathrm{SC}}
      =
      \begin{bmatrix}
        \AZ^{\mathrm{SC}}\\
        \AD^{\mathrm{SC}}
      \end{bmatrix}
      \in\F_2^{Lm_X\times n}.
\]
The spatially coupled extended sparse matrices are
\begin{equation}
      H_Z^{\prime\mathrm{SC}}
      =
      \begin{bmatrix}
        \AZ^{\mathrm{SC}} & 0\\
        B^{\mathrm{SC}}   & I_n
      \end{bmatrix}
      \in\F_2^{L(m_Z+M)\times 2n},
      \qquad
      H_X^{\prime\mathrm{SC}}
      =
      \begin{bmatrix}
        (\AX^{\mathrm{SC}})^T & (B^{\mathrm{SC}})^T
      \end{bmatrix}
      \in\F_2^{n\times L(m_X+M)} .
      \label{eq:sc-extended-sparse-matrices}
\end{equation}
The associated visible codes are obtained by puncturing the hidden coordinates:
\[
\begin{aligned}
      \CZ^{\mathrm{SC}}
      &=\{\bm v\in\F_2^n:\exists\,\bm u\in\F_2^n,\
          H_Z^{\prime\mathrm{SC}}
          \begin{bmatrix}\bm u\\ \bm v\end{bmatrix}=\bm0\},\\
      \CX^{\mathrm{SC}}
      &=\{\bm v\in\F_2^n:\exists\,\bm w\in\F_2^{Lm_X},\
          H_X^{\prime\mathrm{SC}}
          \begin{bmatrix}\bm w\\ \bm v\end{bmatrix}=\bm0\}.
\end{aligned}
\]
This is a tail-biting finite ensemble.  The deterministic seed used in
\Cref{thm:joint-saturation} is an additional DE boundary condition and is not
part of \eqref{eq:sc-extended-sparse-matrices}.  The degree notation used
in the DE sections is the special case \(k_Z=k_\Delta=k_B=k\) and
\(j_\Delta=j_X-j_Z\).
\end{definition}

The definition above fixes the graph ensemble before any decoding dynamics are
introduced.  The following table only summarizes the block sizes and degrees
so that the nested construction can be read without unpacking the whole
definition each time.

\begin{center}
\refstepcounter{table}\label{tab:construction-degrees}
\footnotesize
\textbf{Table~\thetable}\quad
Construction matrices and degrees.  Here \(m_Z=(j_Z/k)N\), \(m_\Delta=((j_X-j_Z)/k)N\), and \(m_X=(j_X/k)N\).
\par\smallskip
\begin{tabular}{@{}p{0.16\linewidth}p{0.22\linewidth}p{0.22\linewidth}p{0.30\linewidth}@{}}
\toprule
Matrix & Size & Degree & Comment \\
\midrule
\(\AZ\)
& \(m_Z\times N\)
& \((j_Z,k)\)-regular
& Gives the Z-side A-check \eqref{eq:z-a-check}.\\
\(\AD\)
& \(m_\Delta\times N\)
& \((j_X-j_Z,k)\)-regular
& Additional row block in \(\AX\).\\
\(\AX=[\AZ;\AD]\)
& \(m_X\times N\)
& \((j_X,k)\)-regular
& Appears as \(\AX^T\) in the X-side check \eqref{eq:x-side}.\\
\(B\)
& \(N\times N\)
& \((k,k)\)-regular
& Appears in the Z-side B-check \eqref{eq:z-b-check} and the X-side check \eqref{eq:x-side}.\\
\(H'_Z\)
& \((m_Z+N)\times 2N\)
& --
& Z-side extended sparse matrix.\\
\(H'_X\)
& \(N\times(m_X+N)\)
& --
& X-side extended sparse matrix.\\
\(H_X\)
& \(\rho_X\times N\)
& Generally dense
& Dense visible-coordinate row basis with \(\Row(H_X)=\CX^\perp\).\\
\(H_Z\)
& \(\rho_Z\times N\)
& Generally dense
& Dense visible-coordinate row basis with \(\Row(H_Z)=\CZ^\perp\).\\
\bottomrule
\end{tabular}
\end{center}

The next proposition checks that the visible codes obtained from the punctured sparse representation satisfy the CSS orthogonality condition.  The DE is written on the sparse representation, but this step fixes its relation to the usual dense visible-coordinate parity-check matrices.

\begin{proposition}[Dense parity-check matrices]\label{prop:dense-css-checks}
The visible codes of \Cref{def:ensemble} satisfy \(\CZ^\perp\subseteq\CX\).  Hence \(\CX,\CZ\) define a CSS pair.  The corresponding dense visible-coordinate parity-check matrices \(H_X,H_Z\) are any row-basis matrices satisfying
\begin{equation}
      \Row(H_X)=\CX^\perp,\qquad
      \Row(H_Z)=\CZ^\perp .
      \label{eq:dense-check-rowspaces}
\end{equation}
Since \(\Row(H_Z)=\CZ^\perp\subseteq\CX=(\Row(H_X))^\perp\), every row of \(H_Z\) is orthogonal to every row of \(H_X\), and hence
\[
      H_XH_Z^T=\bm{0}.
\]

More explicitly, if \(K_X\) is a row-basis matrix of \(\Ker \AX\), one may take
\[
      H_X=K_XB^T .
\]
One then replaces \(K_XB^T\) by a row basis of the same row space.  Also
\[
      \Row(H_Z)
      =
      \{\bm v\in\F_2^N:\exists\,\bm x\in\F_2^{m_Z},\
      \AZ^T\bm x+B^T\bm v=\bm{0}\},
\]
which is obtained by taking the kernel of \([\AZ^T\ B^T]\), splitting hidden and visible coordinates, and projecting to the visible block.  In general, these compressed matrices \(H_X,H_Z\) are dense.
\end{proposition}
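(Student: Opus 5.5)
The plan is to work directly with the two descriptions $\CZ=B(\Ker\AZ)$ and $\CX=\{\bm v:B^T\bm v\in\Row(\AX)\}$ from \Cref{def:ensemble}. The first step checks these descriptions against \eqref{eq:punctured-visible-codes}. From $H'_Z[\bm u;\bm v]=\bm0$ we get $\AZ\bm u=\bm0$ and $B\bm u+\bm v=\bm0$, so $\bm v=B\bm u$ with $\bm u\in\Ker\AZ$. From $H'_X[\bm w;\bm v]=\bm0$ we get $\AX^T\bm w=B^T\bm v$ (signs are irrelevant over $\F_2$). This says exactly that $B^T\bm v\in\Row(\AX)$, with row spaces viewed as column spaces of the transpose, as the paper's convention stipulates.

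The second step computes $\CZ^\perp$. For a linear map image, $(B\,\Ker\AZ)^\perp=\{\bm v:B^T\bm v\in(\Ker\AZ)^\perp\}$, and over $\F_2$ the space $(\Ker\AZ)^\perp$ equals $\Row(\AZ)$ by the rank--nullity/double-orthogonal-complement identity, which holds over any field. Hence $\CZ^\perp=\{\bm v:B^T\bm v\in\Row(\AZ)\}$. Since $\Row(\AZ)\subseteq\Row(\AX)$ by the nesting $\AX=[\AZ;\AD]$, we obtain $\CZ^\perp\subseteq\CX$.

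The same computation yields the explicit formula $\Row(H_Z)=\CZ^\perp=\{\bm v:\exists\bm x,\ \AZ^T\bm x+B^T\bm v=\bm0\}$, which is the projection of $\Ker[\AZ^T\ B^T]$ onto the visible block. Symmetrically, $\CX=\{\bm v:B^T\bm v\in\Row(\AX)\}=(B(\Ker\AX))^\perp$ by the identity above applied to $\AX$. Therefore $\CX^\perp=B(\Ker\AX)$, which is spanned by the columns of $BK_X^T$. Equivalently, it is the row space of $K_XB^T$, and this justifies the choice $H_X=K_XB^T$, followed by reduction to a row basis.

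The orthogonality $H_XH_Z^T=\bm0$ then follows as stated: $\Row(H_Z)=\CZ^\perp\subseteq\CX=\Row(H_X)^\perp$. The CSS pair property is the standard condition $\CZ^\perp\subseteq\CX$, equivalently $\CX^\perp\subseteq\CZ$.

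I expect no real obstacle; the only delicate point is bookkeeping. The identity $(\operatorname{Im}M)^\perp=\Ker M^T$ and the identity $(\Ker A)^\perp=\Row(A)$ must both be applied with the column-vector convention for $\Row$. The nesting $\Row(\AZ)\subseteq\Row(\AX)$ must also be used on the correct side of the inclusion.
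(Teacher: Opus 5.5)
Your proposal is correct and follows the paper's proof: you compute $\CZ^\perp=\{\bm v:B^T\bm v\in\Row(\AZ)\}$ via $(\Ker\AZ)^\perp=\Row(\AZ)$, use $\Row(\AZ)\subseteq\Row(\AX)$ to get $\CZ^\perp\subseteq\CX$, and read the $H_Z$ formula off the same set, just as the paper does. Your derivation of $\CX^\perp=B(\Ker\AX)=\Row(K_XB^T)$ is the dual form of the paper's observation $\CX=\Ker(K_XB^T)$, and your check of the punctured-kernel descriptions supplies a step that the definition only asserts.
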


\begin{proof}
We have
\[
      \CZ^\perp
      =\{\bm v\in\F_2^N:B^T\bm v\in(\Ker\AZ)^\perp=\Row(\AZ)\}.
\]
Since \(\Row(\AZ)\subseteq\Row(\AX)\), this gives \(\CZ^\perp\subseteq\{\bm v:B^T\bm v\in\Row(\AX)\}=\CX\).  This is the orthogonality stated above.  Moreover, \(\CX=\{\bm v:K_XB^T\bm v=\bm{0}\}\), so the row space of \(K_XB^T\) is the parity-check row space of \(\CX\).  The expression for \(H_Z\) is exactly the displayed expression for \(\CZ^\perp\).
\end{proof}

The bitmaps in
\Cref{fig:finite-paper-extended-checks,fig:finite-paper-compressed-checks}
reproduce the extended sparse matrices \(H'_Z,H'_X\) and the compressed
visible-coordinate matrices \(H_Z,H_X\) from the finite example in
\cite{Kasai2026FiniteDegree}.  They are included here to show the consequence of
\Cref{prop:dense-css-checks}: after puncturing the hidden variables, the actual
visible parity-check matrices need not retain the sparsity of the extended
matrices \(H'_Z,H'_X\) in \eqref{eq:extended-sparse-matrices}.  The example uses
the notation of \cite{Kasai2026FiniteDegree}, with
\((j_Z,k_Z,j_\Delta,k_\Delta,k)=(3,8,2,8,2)\), \(n=40\),
\(m_Z=15\), \(m_\Delta=10\), and \(m_X=25\).  The finite-instance
parameters are summarized in \Cref{tab:finite-example-parameters}.

\begin{table}[t]
\centering
\caption{Parameters of the finite CSS instance displayed in
\Cref{fig:finite-paper-extended-checks,fig:finite-paper-compressed-checks}.  The
design dimension is the value predicted from the sparse block sizes, while the
actual dimension is computed from the displayed visible row bases.}
\label{tab:finite-example-parameters}
\small
\begin{tabular}{@{}p{0.30\linewidth}p{0.26\linewidth}p{0.34\linewidth}@{}}
\toprule
Quantity & Value & Meaning \\
\midrule
Degree tuple
& \((j_Z,k_Z,j_\Delta,k_\Delta,k)=(3,8,2,8,2)\)
& \(A_Z\) is \((3,8)\)-regular, \(A_\Delta\) is \((2,8)\)-regular, and \(B\) is \((2,2)\)-regular.\\
Visible length
& \(n=40\)
& Number of physical visible coordinates.\\
Sparse block row counts
& \((m_Z,m_\Delta,m_X)=(15,10,25)\)
& Here \(m_X=m_Z+m_\Delta\).\\
Extended matrices
& \(H'_Z\in\F_2^{55\times80}\), \(H'_X\in\F_2^{40\times65}\)
& Matrices shown in \Cref{fig:finite-paper-extended-checks}.\\
Visible row bases
& \(H_Z\in\F_2^{16\times40}\), \(H_X\in\F_2^{15\times40}\)
& Matrices shown in \Cref{fig:finite-paper-compressed-checks}.\\
Design CSS dimension and rate
& \(K_Q^{\mathrm{des}}=10\), \(R_Q^{\mathrm{des}}=1/4\)
& Computed from \(n-(n-m_X)-m_Z=m_X-m_Z=25-15\).\\
Displayed CSS dimension and rate
& \(K_Q=9\), \(R_Q=9/40\)
& Computed as \(n-\operatorname{rank} H_X-\operatorname{rank} H_Z=40-15-16\).\\
\bottomrule
\end{tabular}
\end{table}

\begin{figure}[!htbp]
\centering
\makebox[0.92\textwidth][r]{\includegraphics[width=0.92\textwidth]{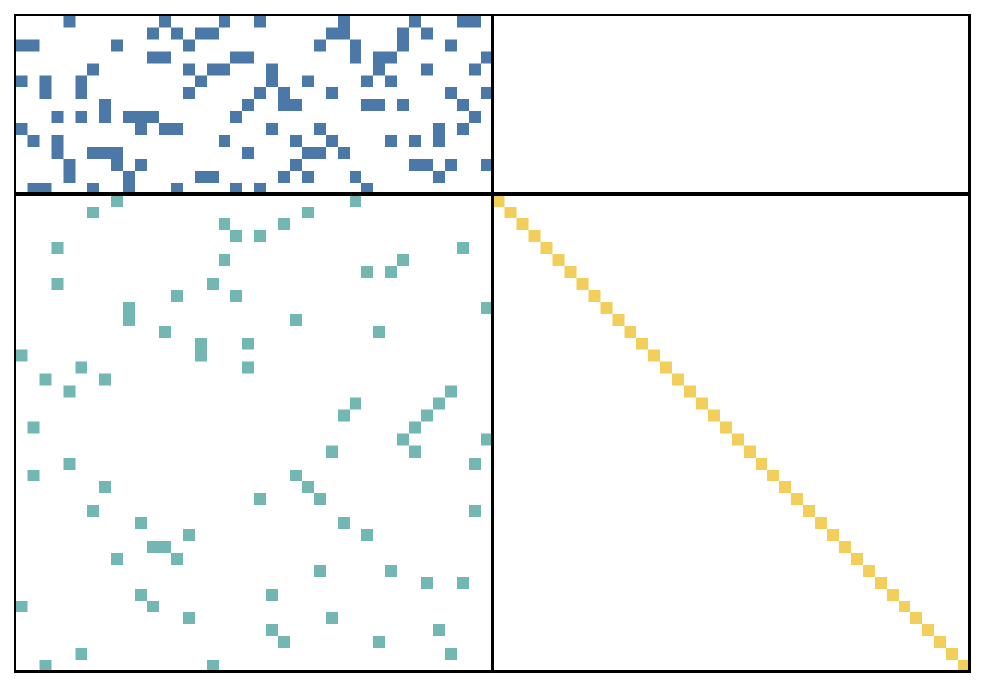}}
\par\medskip
\makebox[0.92\textwidth][r]{\includegraphics[width=0.7475\textwidth]{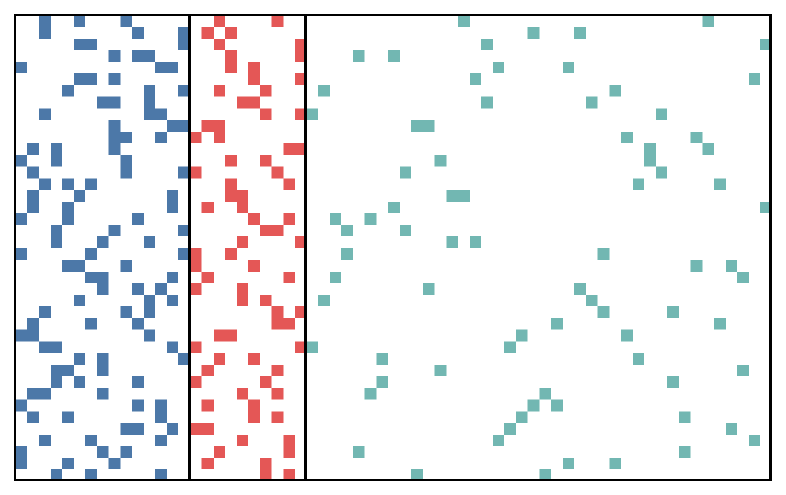}}
\caption{Extended sparse parity-check matrices reproduced from
\cite{Kasai2026FiniteDegree}.  The upper bitmap is the Z-side matrix
\(H'_Z\in\F_2^{55\times80}\), with blocks \(A_Z\), \(B\), and \(I_n\).  The lower
bitmap is the X-side matrix
\(H'_X=[A_X^T\ B^T]=[A_Z^T\ A_\Delta^T\ B^T]\in\F_2^{40\times65}\).  The two
bitmaps are shown with the same column scale and right alignment, so the
rightmost visible-coordinate blocks, both of width 40, have matching horizontal
extent.  Black lines indicate block boundaries.}
\label{fig:finite-paper-extended-checks}
\end{figure}

\begin{figure}[!htbp]
\centering
\includegraphics[width=0.66\textwidth]{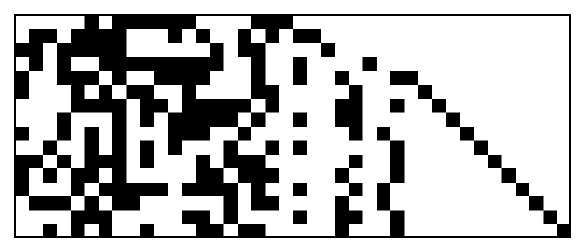}
\par\medskip
\includegraphics[width=0.66\textwidth]{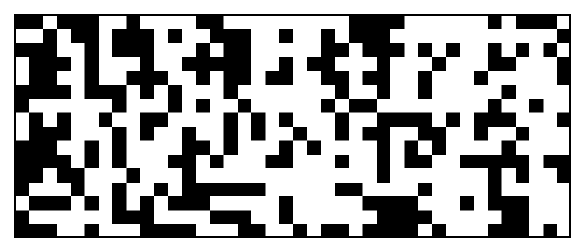}
\caption{Compressed visible-coordinate parity-check matrices reproduced from
\cite{Kasai2026FiniteDegree}.  The upper bitmap is the Z-side matrix
\(H_Z\in\F_2^{16\times40}\), obtained by projecting a basis of
\(\Ker[A_Z^T\ B^T]\) to the visible component.  The lower bitmap is the X-side
matrix \(H_X=K_XB^T\in\F_2^{15\times40}\), where \(K_X\) is a basis matrix of
\(\Ker A_X\).  No final
reduced row echelon form is applied in the displayed representatives.  The two
bitmaps are shown with the same horizontal scale because both matrices have 40
visible-coordinate columns.}
\label{fig:finite-paper-compressed-checks}
\end{figure}

\Cref{fig:sc-extended-sparse-checks} shows one small tail-biting spatially
coupled realization of \eqref{eq:sc-extended-sparse-matrices}.  The example
uses \(L=20\), \(w=2\), \(M=8\),
\[
      \AZ^{\mathrm{SC}}\sim(3,8,8,20,2),\qquad
      \AD^{\mathrm{SC}}\sim(2,8,8,20,2),\qquad
      B^{\mathrm{SC}}\sim(2,2,8,20,2).
\]
The two bitmaps display only the sparse extended matrices; the dense visible
matrices obtained after puncturing are different objects.

\begin{figure}[!htbp]
\centering
\makebox[0.96\textwidth][r]{\includegraphics[width=0.96\textwidth]{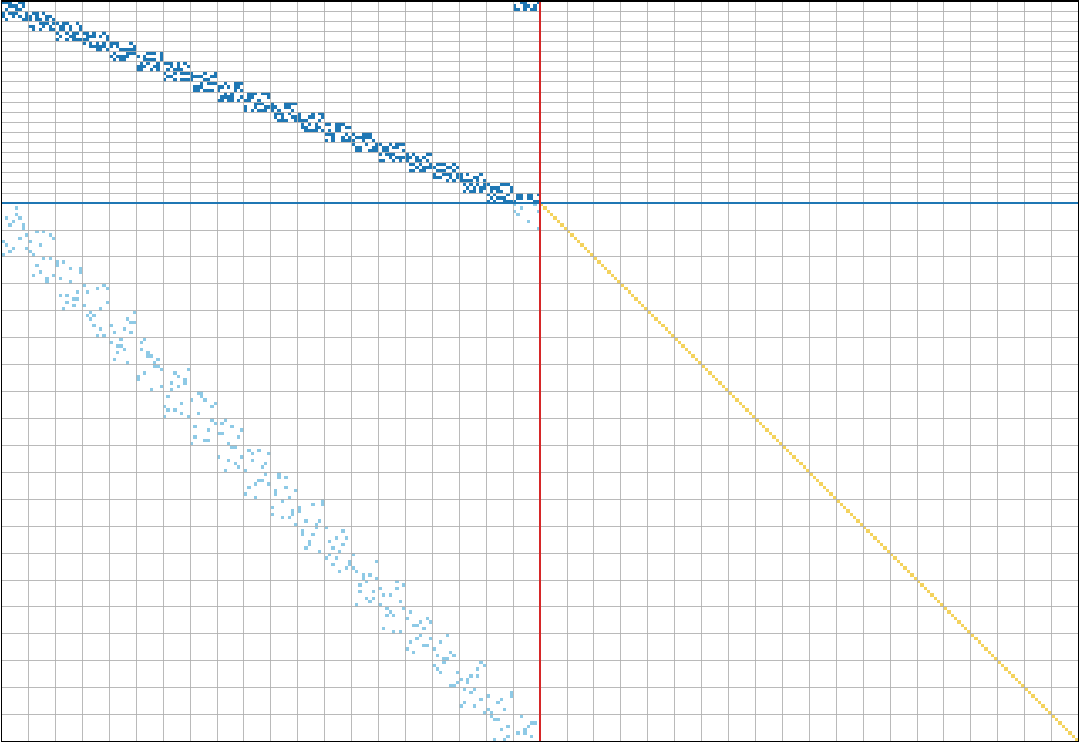}}
\par\medskip
\makebox[0.96\textwidth][r]{\includegraphics[width=0.78\textwidth]{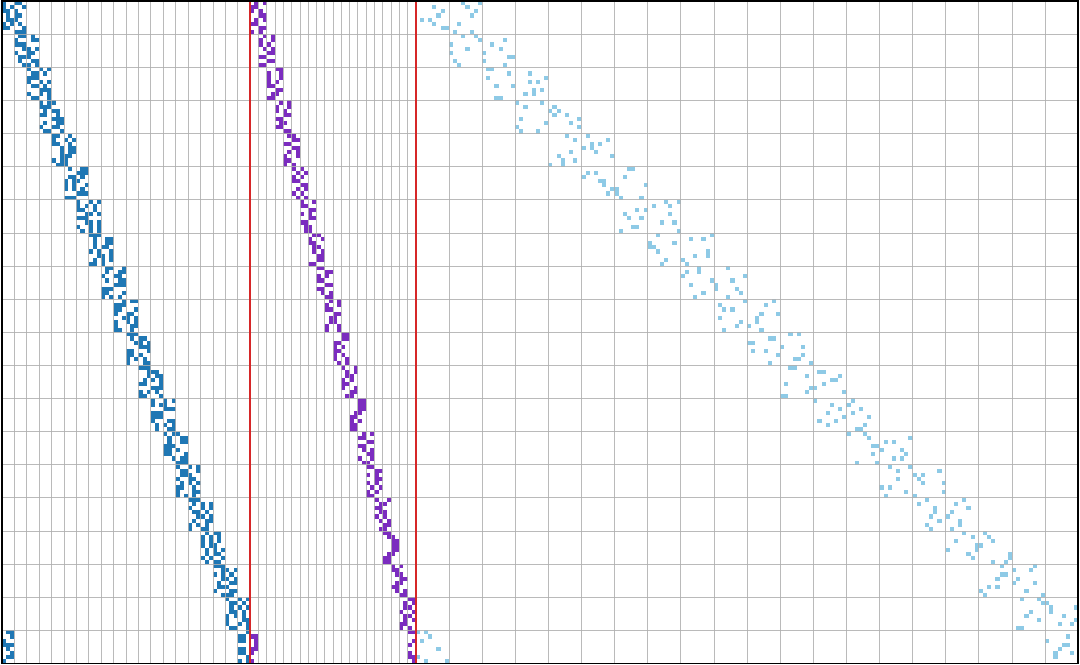}}
\caption{A tail-biting spatially coupled sparse-matrix realization.  The upper
bitmap is \(H_Z^{\prime\mathrm{SC}}\), with \(A_Z^{\mathrm{SC}}\) in blue,
\(B^{\mathrm{SC}}\) in light blue, and \(I_n\) in yellow.  The lower bitmap is
\(H_X^{\prime\mathrm{SC}}=[(\AX^{\mathrm{SC}})^T\ (B^{\mathrm{SC}})^T]\), with
\((A_Z^{\mathrm{SC}})^T\) in blue, \((A_\Delta^{\mathrm{SC}})^T\) in purple,
and \((B^{\mathrm{SC}})^T\) in light blue.  The two bitmaps are right-aligned
and scaled so that the visible, unpunctured \(n\)-coordinate blocks have the
same physical width.  Thin gray lines mark section boundaries; thick lines mark
block boundaries.}
\label{fig:sc-extended-sparse-checks}
\end{figure}

For the same realization, \Cref{fig:sc-dense-visible-checks} displays dense
visible-coordinate row bases obtained after puncturing the hidden coordinates.
The Z-side matrix is obtained by projecting a basis of
\(\Ker[(\AZ^{\mathrm{SC}})^T\ (B^{\mathrm{SC}})^T]\) to the visible block, and
the X-side matrix is \(K_X(B^{\mathrm{SC}})^T\), where \(K_X\) is a basis matrix
of \(\Ker \AX^{\mathrm{SC}}\).  This finite example illustrates that the
sparse matrices in \Cref{fig:sc-extended-sparse-checks} represent a generally
dense visible-coordinate CSS pair.

\begin{figure}[!htbp]
\centering
\includegraphics[width=0.82\textwidth]{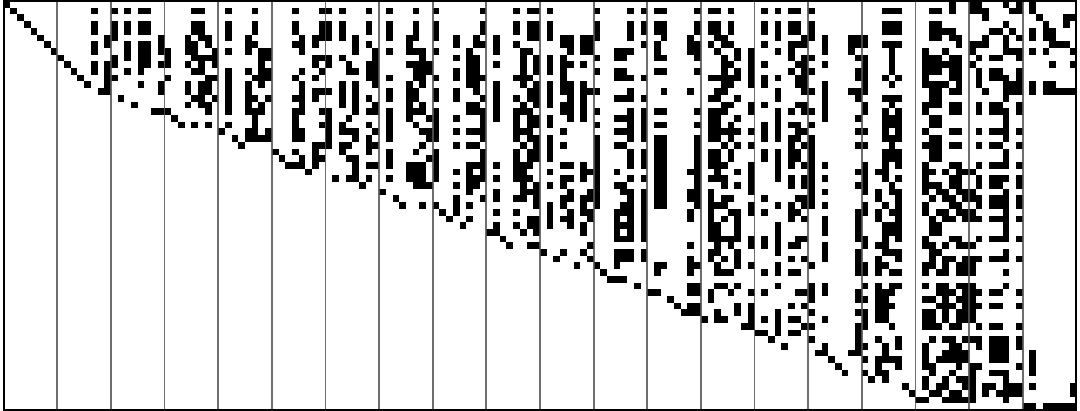}
\par\medskip
\includegraphics[width=0.82\textwidth]{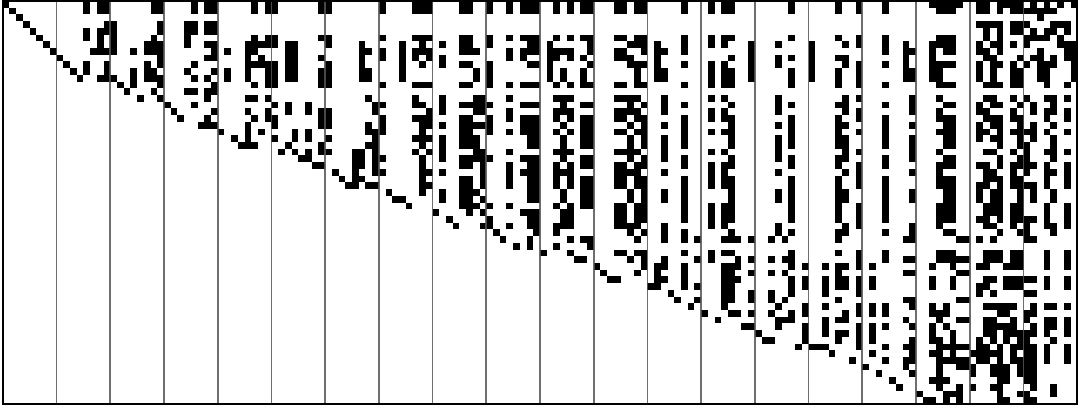}
\caption{Dense visible-coordinate parity-check row bases corresponding to the
tail-biting realization in \Cref{fig:sc-extended-sparse-checks}.  The upper
bitmap is the Z-side matrix \(H_Z\in\F_2^{61\times160}\).  The lower bitmap is
the X-side matrix \(H_X\in\F_2^{60\times160}\).  Both matrices are displayed on
the same horizontal scale because they have the same \(n=160\) visible
coordinates.}
\label{fig:sc-dense-visible-checks}
\end{figure}
\FloatBarrier

Choose any syndrome representatives \(\bm r_X,\bm r_Z\in\F_2^N\) satisfying
\[
 H_Z\bm r_X=\bm\sigma_Z,\qquad H_X\bm r_Z=\bm\sigma_X,
\]
and set \(\vt_Z=\bm r_X\), \(\vt_X=B^T\bm r_Z\).  Such representatives exist for every measured syndrome and can be chosen from the syndrome alone, without knowing the error.  The identities \(\Ker H_Z=\CZ\) and \(\Ker H_X=\CX\) then give the following equivalent sparse affine systems.  The Z-side A-check \eqref{eq:z-a-check} is
\begin{equation}
      \AZ\vf_Z=\bm{0}                                  \label{eq:z-a-check}
\end{equation}
and the Z-side B-check \eqref{eq:z-b-check} is
\begin{equation}
      \ve_X=\vt_Z+B\vf_Z.                               \label{eq:z-b-check}
\end{equation}
This is the sparse representation of the dense syndrome equation involving \(H_Z\ve_X\).  The X-side sparse equations use the stacked matrix and transpose form
\begin{equation}
      B^T\ve_Z=\vt_X+\AX^T\vg_X.                       \label{eq:x-side}
\end{equation}
This is the sparse representation of the dense syndrome equation involving \(H_X\ve_Z\).  We call \eqref{eq:z-a-check} a Z-side A-check, \eqref{eq:z-b-check} a Z-side B-check, and the constraint in \eqref{eq:x-side} an X-side check.  The constant offsets \(\vt_Z,\vt_X\) change only the recovered bit values; they do not change whether an erasure message is known or unresolved.

The next proposition computes the design rate of the sparse representation and
the hashing-bound channel parameter that is compared with the potential
threshold in \Cref{sec:threshold-saturation}.  A
quantum erasure leaves two unknown binary Pauli degrees of freedom per erased
coordinate.  The calculation is the degree-of-freedom form of the erasure
hashing/capacity expression in \cite{BennettErasure1997}.

For the actual finite code, put
\[
 \nu_Z=\dim(\Ker B\cap\Ker\AZ),\qquad
 \nu_X=\dim(\Ker B\cap\Ker\AX).
\]
The visible dimensions are \(\dim\CZ=N-\operatorname{rank}\AZ-\nu_Z\) and
\(\dim\CX=\operatorname{rank}\AX+\nu_X\), so
\begin{equation}
 K_Q=\operatorname{rank}\AX-\operatorname{rank}\AZ+\nu_X-\nu_Z
 \leq m_X-m_Z.
 \label{eq:actual-css-dimension}
\end{equation}
Indeed, kernel nesting gives \(\nu_X\leq\nu_Z\), and appending the
\(m_\Delta\) rows increases rank by at most \(m_\Delta\).
This is the rank/overlap identity of \cite[Proposition 2.4]{Kasai2026FiniteDegree}.
The design dimension discards these rank and overlap corrections.
Consequently the hashing parameter below is defined from the design rate;
identifying it with an actual asymptotic coding rate requires control of
those corrections and of the physical seed cost.

\begin{proposition}[Design rate and hashing-bound parameter]\label{prop:rate}
For the finite-degree ensemble of \Cref{def:ensemble} with common check degree
\(k\), the constituent design rates are
\[
      R_Z^{\mathrm{des}}=1-\frac{j_Z}{k},
      \qquad
      R_X^{\mathrm{des}}=\frac{j_X}{k},
\]
and the CSS design rate is
\[
      R_Q^{\mathrm{des}}=\frac{j_X-j_Z}{k}.
\]
The X/Z equal-rate specialization is the case
\(R_Z^{\mathrm{des}}=R_X^{\mathrm{des}}\), equivalently \(j_X=k-j_Z\).  It has
\[
      R_Q^{\mathrm{des}}=1-\frac{2j_Z}{k}.
\]
Equivalently, for a target design rate \(0<\rho<1\), any \(k\) such that
\[
      j_Z=\frac{1-\rho}{2}k,\qquad
      j_X=\frac{1+\rho}{2}k
\]
are integers gives an X/Z equal-rate degree triple with
\(R_Q^{\mathrm{des}}=\rho\).  A purely integral parametrization is
\[
      (j_Z,j_X,k)=(j,j+\lambda,2j+\lambda),
      \qquad
      R_Q^{\mathrm{des}}=\frac{\lambda}{2j+\lambda},
      \qquad j\geq2,\ \lambda\geq1 .
\]
For the quantum erasure channel, an erased coordinate leaves two unknown binary Pauli components, and the hashing-bound channel parameter is
\[
      \epsilon_{\mathrm{hash}}
        =\frac{1-R_Q^{\mathrm{des}}}{2}.
\]
In the design-rate-\(1/3\) example \((j_Z,j_X,k)=(4,8,12)\),
\[
      \epsilon_{\mathrm{hash}}=1/3 .
\]
\end{proposition}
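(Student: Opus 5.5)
The plan is a direct degree-of-freedom count on the sparse representation of \Cref{def:ensemble}, followed by the elementary algebra for the specializations.

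\textbf{Constituent design rates.} For the Z side, the code \(\CZ=B(\Ker\AZ)\) is parametrized by \(\vf_Z\in\Ker\AZ\). I will count the \(N\) hidden coordinates of \(\bm u\) and subtract the \(m_Z=(j_Z/k)N\) A-checks, which assumes full rank and no overlap with \(\Ker B\). This gives design dimension \(N-m_Z\) and hence \(R_Z^{\mathrm{des}}=1-j_Z/k\). For the X side, \(\CX=\{\bm v:B^T\bm v\in\Row(\AX)\}\) is the image under \((B^T)^{-1}\) of \(\Row(\AX)\) in the design count. Its design dimension is therefore \(m_X=(j_X/k)N\), so \(R_X^{\mathrm{des}}=j_X/k\). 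These counts are exactly the values obtained by dropping the rank and overlap corrections \(\nu_Z,\nu_X\) in \eqref{eq:actual-css-dimension}.

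\textbf{CSS design rate.} By \Cref{prop:dense-css-checks}, \(\CZ^\perp\subseteq\CX\), so the CSS dimension is \(\dim\CX-\dim\CZ^\perp\). Its design value is
\[
m_X-\bigl(N-(N-m_Z)\bigr)=m_X-m_Z=m_\Delta .
\]
Dividing by \(N\) gives \(R_Q^{\mathrm{des}}=(j_X-j_Z)/k\), which matches the upper bound in \eqref{eq:actual-css-dimension}.

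\textbf{Equal-rate specialization.} Setting \(1-j_Z/k=j_X/k\) gives \(j_X=k-j_Z\). Substituting yields \(R_Q^{\mathrm{des}}=(k-2j_Z)/k=1-2j_Z/k\). For a target \(\rho\), solving \(j_X-j_Z=\rho k\) together with \(j_X+j_Z=k\) gives the stated formulas \(j_Z=\tfrac{1-\rho}{2}k\) and \(j_X=\tfrac{1+\rho}{2}k\). The integral family \((j,j+\lambda,2j+\lambda)\) satisfies \(j_Z+j_X=k\) directly, and \(R_Q^{\mathrm{des}}=\lambda/(2j+\lambda)\). The constraints \(j\ge2,\lambda\ge1\) ensure \(2\le j_Z<j_X<k\).

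\textbf{Hashing parameter.} This step is the only conceptual one, since the rest is bookkeeping. I will argue that an erasure of probability \(\epsilon\) on the quantum erasure channel destroys \(2\) binary degrees of freedom per erased coordinate, namely the unknown \(e_{X,i}\) and \(e_{Z,i}\). The channel therefore retains capacity \(1-2\epsilon\) logical qubits per coordinate, as in the hashing/capacity expression of \cite{BennettErasure1997}. Equating \(R_Q^{\mathrm{des}}=1-2\epsilon\) gives \(\epsilon_{\mathrm{hash}}=(1-R_Q^{\mathrm{des}})/2\).

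The main care is to state this as a definition relative to the design rate rather than the actual rate, in line with the caveat preceding the proposition. With that understood, no genuine obstacle remains.

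\textbf{Example.} Plugging in \((4,8,12)\) gives \(R_Q^{\mathrm{des}}=4/12=1/3\), so \(\epsilon_{\mathrm{hash}}=(1-1/3)/2=1/3\).
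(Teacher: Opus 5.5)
Your proposal is correct and follows the paper's proof. It uses the same formal counts $N-m_Z$ and $m_X$ with the rank and overlap corrections dropped, the same CSS dimension $m_X-m_Z$ (your $\dim\CX-\dim\CZ^\perp$ is just the paper's $R_Z^{\mathrm{des}}+R_X^{\mathrm{des}}-1$ rewritten), and the same degree-of-freedom condition $2\epsilon=1-R_Q^{\mathrm{des}}$. One small fix: drop the phrasing via $(B^T)^{-1}$, because $B$ is singular whenever $k$ is even (its rows sum to zero), including $k=12$; the design value $m_X$ follows instead from $\dim\CX=\operatorname{rank}\AX+\nu_X$, which holds for any $B$ and which you already cite.
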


\begin{proof}
Formally counting independent rows and no dimension loss under puncturing gives \(R_Z^{\mathrm{des}}=1-j_Z/k\) from the \(N\) hidden variables and \(m_Z=(j_Z/k)N\) A-constraints.  The corresponding X-side count gives design dimension \(m_Z+m_\Delta=(j_X/k)N\), hence \(R_X^{\mathrm{des}}=j_X/k\).  The CSS design rate is \(R_Z^{\mathrm{des}}+R_X^{\mathrm{des}}-1=(j_X-j_Z)/k\).
For the quantum erasure channel, each erased coordinate contributes two unresolved binary degrees of freedom, so the conditional degree-of-freedom density is \(2\epsilon\).  The hashing-bound condition is \(2\epsilon=1-R_Q^{\mathrm{des}}\).
\end{proof}

\section{Density Evolution for CSS Erasure Decoding}\label{sec:uncoupled-joint}

The ensembles in \Cref{sec:model} are finite random sparse-graph ensembles for
each block length.  The following standard density-evolution limit gives the
asymptotic meaning of the deterministic recursions used in this section.

\begin{theorem}[Density-evolution limit for local erasure decoding]\label{thm:de-limit}
Fix the degrees of the sparse CSS ensemble in \Cref{def:ensemble}, an erasure
probability \(\epsilon\), and a finite number \(\ell\) of BP iterations.  As
\(N\to\infty\), the depth-\(\ell\) computation neighborhood of a uniformly
chosen edge of each message type is tree-like with probability tending to one.
Consequently, the expected message-erasure probability after \(\ell\) iterations
converges to the value obtained from the corresponding tree recursion
\cite{RichardsonUrbanke2008}.  The same local limit applies sectionwise to the
tail-biting coupled ensemble of \Cref{def:sc-punctured-ensemble} when \(L,w,\ell\) are
fixed and \(M\to\infty\).  Block and section sizes range over admissible values for the stated simple ensembles.  The computation neighborhood is taken within one constituent graph; independence between the two constituent decoders is not required.  The uncoupled equations and the sectionwise equations in \Cref{sec:sectionwise-de} give the corresponding marginal limits.  The additional message seed is imposed separately.
\end{theorem}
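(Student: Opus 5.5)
We follow the standard concentration/local-tree route of \cite[Ch.~3]{RichardsonUrbanke2008}, applied separately to each constituent graph. First we fix the two constituent Tanner graphs. The Z-side graph is the bipartite graph of \(H'_Z\) in \eqref{eq:extended-sparse-matrices}. Its variable nodes are the hidden \(\bm u\) (degree \(j_Z+k\)) and the visible \(\bm v\) (degree \(1\), through \(I_N\)), and its check nodes are the A-checks (degree \(k\)) and the B-checks (degree \(k+1\)). The X-side graph is that of \(H'_X\), with hidden \(\bm w\) split into \(\AZ\)-rows and \(\AD\)-rows (degree \(k\) each), visible nodes of degree \(k\), and checks of degree \(j_X+k\). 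Each graph is a union of independent socket-model configurations: \(\AZ\) and \(B\) on the Z side, and \(\AZ,\AD,B\) on the X side, all conditioned on simplicity. The erasure pattern \(S\) is i.i.d.\ Bernoulli\((\epsilon)\) on visible nodes and independent of the graph. Hidden nodes are always erased. The offsets \(\vt_Z,\vt_X\) do not affect erasure status, so the message-erasure process depends only on the graph and \(S\).

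The main step is the tree-like neighbourhood bound. For a uniformly chosen edge of a given type, we explore its depth-\(2\ell\) directed neighbourhood in breadth-first order by revealing socket matchings one at a time. All degrees are bounded by a constant \(D\) depending only on \((j_Z,j_X,k)\), so the neighbourhood contains at most \(C=D^{2\ell+1}\) edges. At each revelation, the probability of hitting a socket already in the explored set is at most \(O(C/N)\) in the unconditioned socket model. A union bound then gives \(\Pr[\text{not tree-like}]=O(C^2/N)\). To pass to the simple ensemble, we condition on simplicity. The probability that a socket configuration with bounded degrees is simple is bounded below by a positive constant depending only on degrees, namely the classical \(e^{-\Theta(1)}\) estimate for configuration models with fixed degrees and multiplicity constraints. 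Conditioning therefore inflates the failure probability by at most a constant factor, so it is still \(O(1/N)\).

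The only nonstandard point is that several independent matrices share variable nodes. We handle this by exploring all incident socket families jointly, since the bound above only uses that each family is a uniform matching over \(\Theta(N)\) sockets. This step, together with keeping the simplicity conditioning under control, is the main obstacle.

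On the tree-like event, the BP erasure message after \(\ell\) iterations is a deterministic function of the tree and of the i.i.d.\ erasure labels. The tree is the unimodular/Galton--Watson-type tree with the prescribed degree types, and in the regular case it is deterministic. So the conditional expected erasure probability equals the tree-recursion value exactly. Splitting the expectation on the tree-like event and its complement gives the difference bound \(|\mathbb E[\text{erasure}]-x_\ell|\le O(1/N)\to0\). This proves the uncoupled claim. We do not need joint statements about the two constituents because each expectation is a marginal over one graph.

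For the tail-biting ensemble of \Cref{def:sc-punctured-ensemble} with \(L,w,\ell\) fixed, we repeat the argument inside each offset group. Each \((i,s)\) matching is uniform over \(jM/w=\Theta(M)\) sockets, conditioned on simplicity. The exploration therefore fails with probability \(O(1/M)\), and the tree now carries section labels that shift by the offsets \(s\), each offset being chosen with probability \(1/w\) at every edge. Taking expectations on the labelled tree gives the sectionwise recursions, which are averaged over the window \(w\). An additional seed is a boundary condition on messages and enters only as a fixed input to the tree computation, as stated.
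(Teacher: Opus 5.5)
Your proposal is correct and is the standard Richardson--Urbanke local-tree argument that the paper relies on; the paper gives no separate proof beyond that citation, and your per-constituent marginal viewpoint and your treatment of the seed as an external boundary condition match the caveats built into the statement. One point should be stated more carefully in the coupled case: the offsets of the sockets at a single node are drawn without replacement from the equal-size partition, and the simplicity conditioning further perturbs their joint law, so they are independent and uniform only in the limit $M\to\infty$ with $L,w,\ell$ fixed, which is exactly what the sectionwise recursions require.
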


This theorem justifies replacing a large finite computation neighborhood by a
tree recursion for any fixed number of iterations.  The rest of the section
derives that tree recursion explicitly for the five message types of
\Cref{def:ensemble}.

The next theorem gives the concrete recursion for the uncoupled finite-degree
case with fixed Z-side, X-side, and check degrees.  It identifies the
deterministic recursion whose fixed points are analyzed in
\Cref{sec:potential,sec:constituent-positivity}; it is not a finite-length
concentration or block-error theorem for the random CSS ensembles.

Let \(a,b\) be the erasure probabilities of messages from a Z-side punctured variable to the Z-side A-check \eqref{eq:z-a-check} and to the Z-side B-check \eqref{eq:z-b-check}, respectively.  Let \(c\) be the erasure probability of the Z-side visible-component message to the Z-side B-check \eqref{eq:z-b-check}.  Let \(d,e\) be the erasure probabilities of X-side auxiliary-variable and X-side visible-component messages to the X-side check \eqref{eq:x-side}, respectively.  The reverse check-to-variable erasure probabilities are
\begin{align}
    \hat{a} &=1-(1-a)^{k-1},                                      \label{eq:joint-hat-a}\\
    \hat{b} &=1-(1-c)(1-b)^{k-1},                                  \label{eq:joint-hat-b}\\
    \hat{c} &=1-(1-b)^k,                                           \label{eq:joint-hat-c}\\
    \hat{d} &=1-(1-d)^{j_X-1}(1-e)^k,                              \label{eq:joint-hat-d}\\
    \hat{e} &=1-(1-d)^{j_X}(1-e)^{k-1}.                            \label{eq:joint-hat-e}
\end{align}

\begin{theorem}[Hard-erasure density evolution]\label{thm:joint-uncoupled-de}
For CSS erasure decoding over the quantum erasure channel, the abstract local ternary erasure rule is described by
\begin{equation}
\begin{aligned}
    a^+ &=\hat{a}^{j_Z-1}\hat{b}^k,\\
    b^+ &=\hat{a}^{j_Z}\hat{b}^{k-1},\\
    c^+ &=\epsilon,\\
    d^+ &=\hat{d}^{k-1},\\
    e^+ &=\epsilon\,\hat{e}^{k-1}.
\end{aligned}                                               \label{eq:joint-de}
\end{equation}
The probabilities that the individual visible Pauli components remain unresolved are
\begin{equation}
      r_Z=\epsilon\,\hat{c},\qquad
      r_X=\epsilon\,\hat{e}^k.                                \label{eq:joint-residual}
\end{equation}
These are individual-component erasure probabilities.  Resolving all components with correct messages suffices for recovery, whereas logical recovery can also be possible with unresolved components related by stabilizers.
\end{theorem}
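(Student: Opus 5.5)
The plan is to invoke \Cref{thm:de-limit}, which reduces the claim to computing the erasure recursion on the tree-like computation neighborhood of each message type. On that tree, the incoming messages at a node are independent with the stated edge-type marginals. Three things then have to be specified: the local rules, the degree bookkeeping of the three Tanner graphs from \eqref{eq:z-a-check}, \eqref{eq:z-b-check}, and \eqref{eq:x-side}, and the channel input.

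\textbf{Local rules.} Every constraint is an affine parity check over $\F_2$. A check-to-variable message is therefore known exactly when all other incoming messages are known, which gives the form $1-\prod(1-x)$. The constant offsets $\vt_Z,\vt_X$ shift only the recovered value, never the known/unknown status, as remarked after \eqref{eq:x-side}. A variable-to-check message is erased exactly when the channel observation and all other incoming check messages are erased, which gives a product. The channel factor enters as follows. A punctured variable ($\bm u$ on the Z side, $\vg_X$ on the X side) has no channel observation, so its factor is $1$. A visible component $e_{X,i}$ or $e_{Z,i}$ is known, with value $0$, when $S_i=0$, and erased with probability $\epsilon$. Each constituent decoder uses only its own component, so the fact that the two components share erasure locations plays no role in the marginal recursion. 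This is consistent with \Cref{thm:de-limit}, which does not require independence between the two constituents.

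\textbf{Degree bookkeeping, Z side.} A punctured variable $f_{Z,i}$ lies in $j_Z$ A-checks, since $\AZ$ has column weight $j_Z$. It lies in $k$ B-checks, since $B$ has column weight $k$. Each A-check involves $k$ punctured variables, from the row weight of $\AZ$. This gives $\hat a$, and then
\[
a^+=\hat a^{j_Z-1}\hat b^k,\qquad b^+=\hat a^{j_Z}\hat b^{k-1}.
\]
B-check $i$ is $e_{X,i}=t_{Z,i}+\sum_l B_{il}f_{Z,l}$. It involves $k$ punctured variables, from the row weight of $B$, together with the single visible component $e_{X,i}$. This gives $\hat b$, where $c$ accounts for the visible input, and $\hat c=1-(1-b)^k$. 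Because of the identity block, the visible $e_{X,i}$ has degree one. Its outgoing message is therefore just the channel, so $c^+=\epsilon$, and its final status is $r_Z=\epsilon\hat c$.

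\textbf{Degree bookkeeping, X side.} Check $i$ of \eqref{eq:x-side} involves the $j_X$ auxiliary variables in row $i$ of $\AX^T$ (the column weight of $\AX$). It also involves the $k$ visible components in row $i$ of $B^T$ (the column weight of $B$). This yields $\hat d$ and $\hat e$. Each auxiliary $g_{X,l}$ appears in $k$ checks, from the row weight of $\AX$, giving $d^+=\hat d^{k-1}$. Each visible $e_{Z,i}$ appears in $k$ checks, from the row weight of $B$, giving $e^+=\epsilon\hat e^{k-1}$ and $r_X=\epsilon\hat e^{k}$.

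\textbf{Expected obstacle.} The main point needing care is the claim that \Cref{def:ensemble} induces independent edge types with exactly these degrees. For $\AX=[\AZ;\AD]$ the row partition is prescribed, but each row still has weight $k$ and each column weight $j_X$. The X-side tree is therefore regular with degrees $(j_X,k)$ in $\vg_X$ regardless of which block a row comes from. The tree-likeness itself is supplied by \Cref{thm:de-limit}.

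\textbf{Final remark.} Finally, resolving every component yields a valid estimate. Unresolved components may nevertheless admit stabilizer-equivalent recovery, so $r_Z$ and $r_X$ are only per-component erasure probabilities.
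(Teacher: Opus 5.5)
Your proposal is correct and follows the paper's proof. Both derive the check updates from the erasure rule for the affine checks \eqref{eq:z-a-check}, \eqref{eq:z-b-check}, \eqref{eq:x-side}, obtain $a^+,b^+,d^+,e^+$ and the residuals by counting the other incoming check messages, and get $c^+=\epsilon$ because the visible Z-side component receives no other information: it has degree one through $I_N$ and receives no messages from the X-side decoder. Your explicit row/column-weight bookkeeping, including the note that $\AX=[\AZ;\AD]$ still gives a $(j_X,k)$-regular X-side tree, and your appeal to \Cref{thm:de-limit} for tree-likeness only spell out what the paper's proof leaves implicit.
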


\begin{proof}
The check-to-variable updates \eqref{eq:joint-hat-a}--\eqref{eq:joint-hat-e} are the usual erasure-check updates for \eqref{eq:z-a-check}, \eqref{eq:z-b-check}, and \eqref{eq:x-side}.  A Z-side punctured variable sends an erasure to an A-check iff all other \(j_Z-1\) A-check messages and all \(k\) B-check messages are erased, giving \(a^+\).  It sends an erasure to a B-check iff all \(j_Z\) A-check messages and the other \(k-1\) B-check messages are erased, giving \(b^+\).  Information about one Pauli component does not determine the other component.  Hence the Z-side visible-component message to \eqref{eq:z-b-check} remains erased exactly when the channel erases the coordinate, giving \(c^+=\epsilon\).  The X-side auxiliary-variable update gives \(d^+=\hat d^{k-1}\).  The X-side visible-component message remains erased when the channel erases the coordinate and the other \(k-1\) X-side check messages are erased, giving \(e^+=\epsilon\hat e^{k-1}\).  The residual formulas require one additional incoming check message, giving \eqref{eq:joint-residual}.
\end{proof}

The recursion is closed because the hard-erasure rule tracks only whether each
message is resolved.  Its Z-side and X-side parts will be analyzed by separate
potentials, so we now record the two constituent recursions read off from
\eqref{eq:joint-de}.  The Z-side equations use
\eqref{eq:joint-hat-a}--\eqref{eq:joint-hat-c}:
\begin{equation}
\begin{aligned}
    a^+ &=\hat{a}^{j_Z-1}\hat{b}^k,\\
    b^+ &=\hat{a}^{j_Z}\hat{b}^{k-1},\\
    c^+ &=\epsilon,
\end{aligned}                                               \label{eq:ic-z-de}
\end{equation}
and the Z-side residual is
\begin{equation}
      r_Z=\epsilon\,\hat c .                   \label{eq:ic-z-residual}
\end{equation}
The X-side equations use
\eqref{eq:joint-hat-d}--\eqref{eq:joint-hat-e}:
\begin{equation}
\begin{aligned}
    d^+ &=\hat{d}^{k-1},\\
    e^+ &=\epsilon\,\hat{e}^{k-1},
\end{aligned}                                               \label{eq:ic-x-de}
\end{equation}
and the X-side residual is
\begin{equation}
      r_X=\epsilon\,\hat e^k .                 \label{eq:ic-x-residual}
\end{equation}

\subsection{Sectionwise DE of the sparse matrices}\label{sec:sectionwise-de}
For a scalar profile $z$, define the modulo-$L$ averages
\begin{equation}
 (\mathcal M_-z)_i=\frac1w\sum_{r=0}^{w-1}z_{i-r},\qquad
 (\mathcal M_+z)_i=\frac1w\sum_{r=0}^{w-1}z_{i+r}.
 \label{eq:section-averages}
\end{equation}
Let $\epsilon_i$ be the visible erasure probability in section $i$.
The random socket partitions in \Cref{def:sc-regular-random-matrix} give, in the section-size limit, the Z-side check updates
\begin{align}
 \hat a_i&=1-[1-(\mathcal M_-a)_i]^{k-1},\notag\\
 \hat b_i&=1-(1-\epsilon_i)[1-(\mathcal M_-b)_i]^{k-1},\label{eq:native-z-checks}\\
 \hat c_i&=1-[1-(\mathcal M_-b)_i]^k.\notag
\end{align}
The variable updates and visible residual are
\begin{align}
 a_i^+&=(\mathcal M_+\hat a)_i^{j_Z-1}(\mathcal M_+\hat b)_i^k,\notag\\
 b_i^+&=(\mathcal M_+\hat a)_i^{j_Z}(\mathcal M_+\hat b)_i^{k-1},\qquad c_i^+=\epsilon_i,\label{eq:native-z-variables}\\
 r_{Z,i}&=\epsilon_i\hat c_i.\notag
\end{align}
The factor $\epsilon_i$ in \eqref{eq:native-z-checks} is local: the visible variable is attached to the B-check through $I_n$, so this edge is not window-averaged.

For the X-side graph, transposition reverses the offsets.  Auxiliary and visible variables indexed by an original row section receive messages from original column sections.  Thus
\begin{align}
 \hat d_i&=1-[1-(\mathcal M_+d)_i]^{j_X-1}[1-(\mathcal M_+e)_i]^k,\notag\\
 \hat e_i&=1-[1-(\mathcal M_+d)_i]^{j_X}[1-(\mathcal M_+e)_i]^{k-1},\label{eq:native-x-checks}\\
 d_i^+&=(\mathcal M_-\hat d)_i^{k-1},\qquad
 e_i^+=\epsilon_i(\mathcal M_-\hat e)_i^{k-1},\notag\\
 r_{X,i}&=\epsilon_i(\mathcal M_-\hat e)_i^k.\notag
\end{align}
The two auxiliary row blocks have the same degree $k$ and the same limiting offset distribution, so their marginal messages have a common value $d_i$ under the common initialization.  Reflecting section indices converts the X-side orientation into the usual $\mathcal M_+$ after $\mathcal M_-$ convention.

These equations follow by first conditioning on the check section and then averaging the independently selected socket offsets.  At a variable, the message on the outgoing edge is excluded.  A section-constant profile reduces them to \eqref{eq:joint-de}.  In contrast, applying a common vector average to $(a,b,c)$ averages $c$ as well, and defines a different recursion near a nonuniform channel boundary.  We distinguish that abstract recursion in the numerical comparison below.

An \emph{ideal message seed} clamps $a_i,b_i,d_i,e_i$ to zero on a specified interval; when $\epsilon_i=0$ there, it also gives $c_i=0$.  This is stronger than setting the visible channel to zero.  Indeed, $a_i=b_i=1$ at every section implies $\hat a_i=\hat b_i=1$ and $a_i^+=b_i^+=1$ even if selected $\epsilon_i$ vanish.  Thus visible erasure removal alone does not initialize the required auxiliary messages.

\section{Vector-Potential Preliminaries}\label{sec:potential-preliminaries}

The shift argument used here is a fixed-channel argument.  Its scalar parameter is held fixed throughout the proof.  This permits the positive diagonal weight matrix and the primitives to depend on that parameter.

\begin{definition}[Fixed-channel potential system]\label{def:admissible-vector-system}
Fix $\epsilon$.  On $[0,1]^d$, let $f$ and $g$ be twice continuously differentiable, coordinatewise nondecreasing maps into $[0,1]^d$, with $f(0)=g(0)=0$.  Suppose that a positive diagonal matrix $D$ and scalar functions $F,G$ satisfy
\[
 \nabla F(y)=Df(y),\qquad \nabla G(x)=Dg(x),\qquad F(0)=G(0)=0.
\]
Dependence on the fixed $\epsilon$ is suppressed in this definition.  No claim that $f(x;0)=0$ is imposed on a punctured constituent.  Since the gradients are nonnegative, $F$ and $G$ are nonnegative on the state cube.
\end{definition}

The primitive identities imply
\[
 \nabla U(x;\epsilon)=g'(x)^{\mathsf T}D[x-f(g(x);\epsilon)].
\]
Every DE fixed point is consequently a stationary point of $U$.  The converse would require additional conditions on $g'$ and is not used.

\begin{definition}[Vector potential for an uncoupled DE system]\label{def:general-vector-potential}
For a fixed-channel potential system, with \(D,F,G\) as in
\Cref{def:admissible-vector-system}, the associated vector potential is
\begin{equation}
      U(\bm{x};\epsilon)
        =\bm{g}(\bm{x})^T D\bm{x}
         -G(\bm{x})-F(\bm{g}(\bm{x});\epsilon).              \label{eq:general-vector-potential}
\end{equation}
With the convention \(\inf\emptyset=+\infty\), define
\begin{equation}
      \mathcal F^\star(\epsilon)
      =
      \{\bm{x}\in[0,1]^d\setminus\{\bm{0}\}:
        \bm{x}=\bm{f}(\bm{g}(\bm{x});\epsilon)\},
      \qquad
      \Delta E(\epsilon)
      =
      \inf_{\bm{x}\in\mathcal F^\star(\epsilon)}
      U(\bm{x};\epsilon).                                    \label{eq:general-energy-gap}
\end{equation}
The potential threshold is
\begin{equation}
      \epsilon_{\mathrm{pot}}
      =
      \sup\{\epsilon_0\in[0,1]:
        \Delta E(\epsilon)>0
        \text{ for every }0\leq\epsilon<\epsilon_0\}.          \label{eq:general-potential-threshold}
\end{equation}
\end{definition}

This function is useful because fixed points of the corresponding DE recursion
are critical configurations of the potential, and the coupled-vector potential
argument measures their cost by \(U\).  The energy gap
\(\Delta E(\epsilon)\) in \eqref{eq:general-energy-gap} is the quantity that
controls threshold saturation after spatial coupling.

The next definition fixes the seeded spatially coupled recursion used in the
threshold statement.  It is written for a general vector DE system, so the
same formula applies to any constituent system satisfying
\Cref{def:admissible-vector-system}.

\begin{definition}[Seeded spatially coupled DE]\label{def:seeded-sc-de}
Let the state space be \([0,1]^d\), ordered coordinatewise, and let
\(\bm{0}\) be the successful state.  Fix a coupling length \(L\), a coupling
width \(w\) with \(1\leq w<L\), and a seed interval
\(\mathcal S\subset\mathbb Z/L\mathbb Z\) containing at least $w$ consecutive sections.  Indices below are taken modulo
\(L\).  For a profile
\(\bm{X}^{(\ell)}=(\bm{x}_0^{(\ell)},\ldots,\bm{x}_{L-1}^{(\ell)})\), define
\begin{align}
      \bar{\bm{x}}_{c}^{(\ell)}
        &=\frac1w\sum_{r=0}^{w-1}\bm{x}_{c-r}^{(\ell)},       \label{eq:general-sc-x-average}\\
      \bm{y}_{c}^{(\ell)}
        &=\bm{g}(\bar{\bm{x}}_{c}^{(\ell)}),                  \label{eq:general-sc-g-update}\\
      \bar{\bm{y}}_{i}^{(\ell)}
        &=\frac1w\sum_{r=0}^{w-1}\bm{y}_{i+r}^{(\ell)},       \label{eq:general-sc-y-average}\\
      \bm{x}_{i}^{(\ell+1)}
        &=
        \begin{cases}
          \bm{0}, & i\in\mathcal S,\\
          \bm{f}(\bar{\bm{y}}_{i}^{(\ell)};\epsilon),
             & i\notin\mathcal S .
        \end{cases}                                          \label{eq:general-seeded-sc-de}
\end{align}
The seeded DE is initialized by
\[
      \bm{x}_{i}^{(0)}
        =
        \begin{cases}
          \bm{0}, & i\in\mathcal S,\\
          \bm{1}, & i\notin\mathcal S,
        \end{cases}
\]
where \(\bm{1}\) denotes the largest state in \([0,1]^d\).  The recursion is
successful in the non-seeded region if
\[
      \lim_{\ell\to\infty}\bm{x}_{i}^{(\ell)}=\bm{0}
      \qquad\text{for every } i\notin\mathcal S .
\]
\end{definition}

\begin{theorem}[Fixed-channel seeded convergence]\label{thm:standard-vector-saturation}
Fix a system satisfying \Cref{def:admissible-vector-system} and suppose its nonzero fixed points have energy gap $\Delta E>0$.  Then the recursion in \Cref{def:seeded-sc-de} converges to zero for all unseeded sections when $w$ is sufficiently large.  The required width may depend on this fixed system and its energy gap, but not on the unseeded chain length.
\end{theorem}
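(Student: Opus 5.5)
This is the Yedla--Hassani--Pfister--Narayanan vector-potential threshold-saturation argument \cite{YedlaVector2012}, adapted to a fixed channel and to an arbitrary seed interval of length at least $w$. I will argue by contradiction through monotonicity and a shift of the coupled fixed point.

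\textbf{Step 1 (monotone limit).} Because $f$ and $g$ are coordinatewise nondecreasing and $\bm x^{(1)}\le\bm x^{(0)}$ at every section (seeded sections are $\bm0$, the others start at $\bm1$), induction gives $\bm x^{(\ell+1)}\le\bm x^{(\ell)}$. Hence the profile decreases to a limit $\bm X^\infty$. By continuity of $f$ and $g$, $\bm X^\infty$ is a fixed point of the seeded coupled map, with $\bm x_i^\infty=\bm0$ on $\mathcal S$. Suppose $\bm X^\infty\neq\bm0$. Cut the circle at the seed: since $\mathcal S$ contains $w$ consecutive zero sections, no averaging window couples the two sides of the cut. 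The unseeded sections therefore form a linear chain whose boundary values on both sides are $\bm0$.

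\textbf{Step 2 (coupled potential).} On this chain, define
\[
U_{\mathrm{SC}}(\bm X)=\sum_i\Bigl[\bar{\bm y}_i^TD\bm x_i-G(\bar{\bm x}_i)-F(\bar{\bm y}_i)\Bigr],
\]
written in the standard form with $\bm y_c=\bm g(\bar{\bm x}_c)$. Its gradient at section $i$ is a windowed average of $g'(\bar{\bm x}_c)^TD[\bm x_i-f(\bar{\bm y}_i)]$-type terms, so the fixed point $\bm X^\infty$ is stationary in the unseeded directions. Because $\bm X^\infty$ is nonzero, I will also need a monotone rearrangement. Following the YHPN proof, I replace $\bm X^\infty$ by the largest fixed point that is nondecreasing away from the seed toward a chosen interior peak; this rearranged profile dominates the original and is again a fixed point. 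The interior then contains a flat-ish region whose values approach a point of $\mathcal F^\star(\epsilon)\cup\{\text{non-fixed points}\}$.

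\textbf{Step 3 (shift and Taylor).} Define the shift $S$ by $(S\bm X)_i=\bm x_{i-1}$, with $\bm0$ entering from the seed. A telescoping computation gives
\[
U_{\mathrm{SC}}(S\bm X)-U_{\mathrm{SC}}(\bm X)=-U(\bm x_{\max};\epsilon),
\]
up to boundary terms that vanish because $U(\bm0)=0$, where $\bm x_{\max}$ is the plateau value. A second-order Taylor expansion around the fixed point uses that $\nabla U_{\mathrm{SC}}=0$ there. The remainder is then bounded by $\tfrac12\|\bm X-S\bm X\|^2\sup\|\nabla^2U_{\mathrm{SC}}\|$. Monotonicity together with the $1/w$ averaging bounds $\sum_i\|\bm x_i-\bm x_{i-1}\|^2$ by $d\,\|\bm x_{\max}\|_\infty$, and bounds the Hessian by $K/w$, where $K$ depends only on the $C^2$ norms of $f,g,F,G$ and on $D$ for the fixed $\epsilon$. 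This yields $U(\bm x_{\max};\epsilon)\le K'/w$.

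\textbf{Step 4 (closing the contradiction), the main obstacle.} It remains to show that $\bm x_{\max}$ is close to a genuine nonzero fixed point, and that $U$ is bounded below near such points, so that $\Delta E>0$ contradicts $U\le K'/w$ once $w>K'/\Delta E$. The delicate point is that $\Delta E$ is an infimum over $\mathcal F^\star$ only, whereas $\bm x_{\max}$ is merely a near-fixed point up to $O(1/w)$. I will handle this with a compactness argument. The set of $\bm x$ satisfying $\|\bm x-f(g(\bm x))\|\le\delta$ with $\|\bm x\|\ge\eta$ shrinks to $\mathcal F^\star\cap\{\|\bm x\|\ge\eta\}$ as $\delta\to0$, so by continuity $U\ge\Delta E/2$ there for small $\delta$. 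Small $\|\bm x_{\max}\|$ must be excluded separately. For this I will use $f(0)=g(0)=0$ together with the stationarity condition $\nabla U=g'^TD(x-f(g(x)))$ and the lower bound near $\bm0$ coming from $\Delta E>0$ applied to a neighborhood. If needed, I will restrict to the nonzero fixed-point set staying away from $\bm0$, which is what the positivity of $\Delta E$ on $\mathcal F^\star$ is meant to ensure. All constants depend only on the fixed system and on $\Delta E$, not on $L$, which gives the stated uniformity.
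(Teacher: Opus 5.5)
There is a genuine gap, and it sits where you flag the ``main obstacle''. Your Step~4 tries to show that the plateau value $\bm x_{\max}$ is an $O(1/w)$-approximate fixed point, and to exclude small $\bm x_{\max}$ via ``a lower bound near $\bm0$ coming from $\Delta E>0$''. Neither ingredient is available.

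\begin{itemize}
\item $\Delta E$ is an infimum over nonzero fixed points only. It says nothing about $U$ near the origin, where $U(\bm0)=0$ and $U$ is continuous. So $U(\bm x_{\max})\le K'/w$ is no contradiction when $\bm x_{\max}$ is small.
\item The plateau value need not be close to any fixed point, because its own fixed-point equation averages over windows that reach into the lower part of the profile.
\end{itemize}

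The idea you are missing is that the plateau value is a \emph{sub}-fixed point: $\bm x_{\max}\le f(g(\bm x_{\max}))$, since every window entering its equation has inputs at most $\bm x_{\max}$. By monotonicity, the uncoupled recursion started at $\bm x_{\max}\neq\bm0$ then increases to a nonzero fixed point $\bm z_\infty$, and $U$ does not increase along the way. Indeed, if $\bm y=f(g(\bm x))\ge\bm x$, every $\bm q$ on the segment from $\bm x$ to $\bm y$ satisfies $\bm q\le\bm y\le f(g(\bm q))$, so $\nabla U(\bm q)^{\mathsf T}(\bm y-\bm x)=[\bm q-f(g(\bm q))]^{\mathsf T}Dg'(\bm q)(\bm y-\bm x)\le0$. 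Hence $U(\bm x_{\max})\ge U(\bm z_\infty)\ge\Delta E$ for a nonzero plateau of any size, with no compactness or near-fixed-point estimate. If there is no nonzero fixed point, the existence of $\bm z_\infty$ is itself the contradiction.

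Both this sub-fixed-point property and your telescoping identity in Step~3 require a one-sided profile ending in an exact plateau, which Step~2 does not produce. For a profile that rises from one seed and decays back toward the other, the shift telescopes to a boundary term at the section next to the second seed, not to $-U(\bm x_{\max})$. Moreover, the existence of a dominating fixed point ``monotone toward a chosen interior peak'' is asserted, not shown.

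The paper proceeds differently:
\begin{itemize}
\item It works with the averaged profile $z=A_n^{\mathsf T}x$.
\item It uses a dominating one-sided comparison map $Q$ on $N=m+w-1$ sections, with $m=n+w-1$, which copies $[T_m(Z)]_m$ onto the last $w-1$ sections.
\item Iterated from the all-one profile, $Q$ decreases to a fixed point $Z$ that is nondecreasing in the section index and has the exact plateau $Z_m=\cdots=Z_N=z_*$.
\item The shift then drops one plateau copy, and $F\ge0$ gives $\mathcal U_m(SZ)-\mathcal U_m(Z)\le-U(z_*)$.
\end{itemize}

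Finally, the factor $1/w$ does not come from an $O(1/w)$ Hessian. The Hessian is only bounded by a constant $K$ uniform in $m$ and $w$, since window averages have norm of order one. The $1/w$ comes from the increment bound $\|Z_i-Z_{i-1}\|_\infty\le1/w$ of the averaged fixed profile, combined with $\|SZ-Z\|_1\le d$ in a mixed $\ell_1$/$\ell_\infty$ Taylor bound.
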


\begin{proof}
We give the boundary and norm details of the fixed-channel shift argument.
If every section is seeded there is nothing to prove. Otherwise let $n\geq1$
be the number of unseeded sections. Cutting the ring inside the seed, whose
width is at least $w$, gives $n$ consecutive variable sections with zero
exterior variables. No length-$w$ check window meets unseeded variables on
both sides of the cut. For any integer $r\geq1$, define $A_r$ to be the
$r\times(r+w-1)$ matrix
\[
 (A_r)_{j,i}=\frac1w\,\mathbf1\{j\leq i\leq j+w-1\}.
\]
It acts on each message coordinate separately. In particular,
$\|A_r\|_\infty=1$ and $\|A_r^{\mathsf T}\|_\infty\leq1$.
For the actual unseeded variable profile $x$, put $z=A_n^{\mathsf T}x$.
Its update is
\[
 z^+=T_n(z),\qquad T_r(Z):=A_r^{\mathsf T}f(A_rg(Z)).
\]

Here is an explicit finite one-sided comparison, valid for every $n,w$.
Set $m=n+w-1$ and $N=m+w-1$. On $N$ averaged sections define
\[
 [Q(Z)]_i=
 \begin{cases}
 [T_m(Z)]_i,&1\leq i\leq m,\\
 [T_m(Z)]_m,&m<i\leq N.
 \end{cases}
\]
Initialize $Z^{(0)}$ to the all-one profile. The map $Q$ is monotone and
preserves the cube, so these iterates decrease to a fixed profile $Z$.
They also remain nondecreasing in the section index. Indeed, for any such
profile, the vectors $v_j=f((A_mg(Z))_j)$ are nondecreasing for $1\leq j\leq m$.
With $v_j=0$ for $j\leq0$, the updating sections satisfy
\[
 [Q(Z)]_i=\frac1w\sum_{r=0}^{w-1}v_{i-r},\qquad 1\leq i\leq m,
\]
which is nondecreasing; the remaining sections copy its final value.
Padding the original $n+w-1$ averaged sections with zeros to length $N$
shows by induction that the $Q$ iterates dominate the actual averaged
profile. On every original section, which has index at most $m$, the
comparison retains all original $f$-sites and adds only nonnegative terms.
Thus it suffices to prove that $Z=0$.

Write $z_*=Z_m$, so that $Z_i=z_*$ for $m\leq i\leq N$.
If $Z\ne0$, spatial monotonicity gives $z_*\ne0$. Since every window in
the fixed-point equation at section $m$ has input at most $z_*$,
\[
 z_*=[T_m(Z)]_m\leq f(g(z_*)).
\]
Uncoupled iteration from $z_*$ consequently increases to a nonzero fixed
point $z_\infty$. Its potential does not increase along this ordered
iteration. To check this directly, if $y=f(g(x))\geq x$, every point
$q=x+t(y-x)$ of the connecting segment satisfies
$q\leq y\leq f(g(q))$. Hence
\[
 \nabla U(q)^{\mathsf T}(y-x)
 =[q-f(g(q))]^{\mathsf T}Dg'(q)(y-x)\leq0,
\]
because the Jacobian of the monotone map $g$ has nonnegative entries.
Continuity therefore gives
\[
 U(z_*)\geq U(z_\infty)\geq\Delta E.
\]
If the uncoupled nonzero fixed-point set is empty, the existence of
$z_\infty$ is already a contradiction.

For the remaining case $0<\Delta E<\infty$, use the potential
\[
 \mathcal U_m(Z)=\sum_{i=1}^{N}
 [g(Z_i)^{\mathsf T}DZ_i-G(Z_i)]
 -\sum_{j=1}^{m}F((A_mg(Z))_j).
\]
Its section gradient is
$g'(Z_i)^{\mathsf T}D[Z_i-T_m(Z)_i]$.
Define the shift by $(SZ)_1=0$ and $(SZ)_i=Z_{i-1}$ for $i\geq2$.
The local part of the potential difference telescopes to
$-[g(z_*)^{\mathsf T}Dz_*-G(z_*)]$.
For the $F$ part, all windows except the first shift by one site; the
last unshifted window, indexed by $m$, lies entirely in the plateau.
Since $F\geq0$,
\[
 \mathcal U_m(SZ)-\mathcal U_m(Z)\leq-U(z_*)\leq-\Delta E.
\]
The linear Taylor term is exactly zero. The gradient vanishes for
$i\leq m$ by the $Q$ fixed-point equations, whereas $(SZ-Z)_i=0$ for
$i>m$ by the plateau condition.

The shift has the two bounds
\[
 \|\operatorname{vec}(SZ-Z)\|_\infty\leq\frac1w,
 \qquad
 \|\operatorname{vec}(SZ-Z)\|_1=\|z_*\|_1\leq d.
\]
For the first, the fixed equations on $i\leq m$ give
$Z_i-Z_{i-1}=(v_i-v_{i-w})/w$, where $Z_0=0$ and $v_j=0$ for $j\leq0$;
each coordinate of this difference is at most $1/w$. On the plateau
the differences vanish. The second bound is componentwise telescoping.

For clarity, the Hessian bound can be made independent of $m$ and $w$
without a sign assumption on the Hessian. Use the column-Jacobian
convention and set
\[
 M_f=\sup_y\|f'(y)\|_\infty,\qquad
 M_g=\sup_x\max\{\|g'(x)\|_1,\|g'(x)\|_\infty\},
\]
\[
 H_g=\sup_x\max_{1\leq b\leq d}
       \sum_{a,c=1}^{d}
       \left|\frac{\partial^2g_a(x)}{\partial x_b\partial x_c}\right|,
 \qquad
 K=\|D\|_\infty\bigl(H_g+M_g+M_fM_g^2\bigr).
\]
These constants are finite on the compact state cube.
Stack coordinates section by section, and let $J_g$ denote the block
diagonal Jacobian of sectionwise $g$.
Differentiating the profile gradient gives
\[
 \nabla^2\mathcal U_m=R+J_g^{\mathsf T}(I_N\otimes D)(I-T_m'),
\]
where the $i$-th diagonal block of $R$ has $(b,c)$ entry
\[
 \sum_{a=1}^{d}D_{aa}[Z_{i,a}-T_m(Z)_{i,a}]
        \frac{\partial^2g_a(Z_i)}{\partial x_b\partial x_c}.
\]
Because both $Z$ and $T_m(Z)$ lie in the cube,
$\|R\|_\infty\leq\|D\|_\infty H_g$.
The averaging-matrix bounds give
$\|T_m'\|_\infty\leq M_fM_g$, and
$\|J_g^{\mathsf T}(I_N\otimes D)\|_\infty\leq\|D\|_\infty M_g$.
Thus $\|\nabla^2\mathcal U_m\|_\infty\leq K$ throughout the cube,
uniformly in $m,w$.

Taylor's formula along the segment from $Z$ to $SZ$, using absolute
values and the vanishing linear term, now gives
\[
 \Delta E\leq
 |\mathcal U_m(SZ)-\mathcal U_m(Z)|
 \leq\frac12\|\operatorname{vec}(SZ-Z)\|_1
       K\|\operatorname{vec}(SZ-Z)\|_\infty
 \leq\frac{dK}{2w}.
\]
This contradicts $w>dK/(2\Delta E)$, independently of $n$.
Hence the one-sided comparison converges to zero, as does the original
averaged profile. Finally $x^+=f(A_ng(z))\to0$ because $f(0)=g(0)=0$.
All maps, primitives, and constants in this argument are evaluated at
the fixed channel. In particular, the boundary is implemented by the
absence of exterior $f$-sites, without using $f(x;0)=0$ or comparing
different channel parameters.
\end{proof}

\section{Constituent Potentials and Fixed Points}\label{sec:potential}

The original three-message Z-side recursion has successful state $(0,0,\epsilon)$ and therefore is not a fixed-channel potential system with successful origin.  We first retain its primitive identities for computing fixed-point values, and then give the reduced system used for coupling.  For \eqref{eq:ic-z-de}, put
\[
      D_Z=\operatorname{diag}(j_Z,k,1),
\]
and define
\begin{align}
  F_Z(\bm{y}_Z;\epsilon)
     &=\hat{a}^{j_Z}\hat{b}^k+\epsilon\hat c,                 \label{eq:ic-z-F}\\
  G_Z(\bm{x}_Z)
     &=j_Z\left(a-\frac{1-(1-a)^k}{k}\right)
       +kb-(1-c)\left(1-(1-b)^k\right).                       \label{eq:ic-z-G}
\end{align}
Then \(\nabla F_Z(\bm{y}_Z;\epsilon)=D_Z\bm{f}_Z(\bm{y}_Z;\epsilon)\) and \(\nabla G_Z(\bm{x}_Z)=D_Z\bm{g}_Z(\bm{x}_Z)\), where \(\bm{f}_Z\) is the right side of \eqref{eq:ic-z-de} and \(\bm{g}_Z\) is defined by \eqref{eq:joint-hat-a}--\eqref{eq:joint-hat-c}.  The Z-side potential is
\begin{equation}
  U_Z(\bm{x}_Z;\epsilon)
    =\bm{g}_Z(\bm{x}_Z)^T D_Z\bm{x}_Z
       -G_Z(\bm{x}_Z)-F_Z(\bm{g}_Z(\bm{x}_Z);\epsilon).       \label{eq:ic-z-potential}
\end{equation}
For the X-side system \eqref{eq:ic-x-de}, put
\[
      D_X=\operatorname{diag}(j_X,k),
\]
and define
\begin{align}
  F_X(\bm{y}_X;\epsilon)
     &=\frac{j_X}{k}\hat d^k+\epsilon\hat e^k,                 \label{eq:ic-x-F}\\
  G_X(\bm{x}_X)
     &=j_Xd+ke+(1-d)^{j_X}(1-e)^k-1.                           \label{eq:ic-x-G}
\end{align}
Then \(\nabla F_X(\bm{y}_X;\epsilon)=D_X\bm{f}_X(\bm{y}_X;\epsilon)\) and \(\nabla G_X(\bm{x}_X)=D_X\bm{g}_X(\bm{x}_X)\), where \(\bm{f}_X\) is the right side of \eqref{eq:ic-x-de} and \(\bm{g}_X\) is defined by \eqref{eq:joint-hat-d}--\eqref{eq:joint-hat-e}.  The X-side potential is
\begin{equation}
  U_X(\bm{x}_X;\epsilon)
    =\bm{g}_X(\bm{x}_X)^T D_X\bm{x}_X
       -G_X(\bm{x}_X)-F_X(\bm{g}_X(\bm{x}_X);\epsilon).        \label{eq:ic-x-potential}
\end{equation}
Every constituent fixed point is a stationary point of its potential.  Boundary stationary points can also occur: for $j_X\geq3$, $g_X'(1,e)=0$, whereas $(1,e)$ is a DE fixed point only when $e=\epsilon$.

\begin{proposition}[Reduced Z-side system]\label{prop:reduced-z-system}
Fix $0\leq\epsilon<1$, and put $q_\epsilon(t)=\epsilon+(1-\epsilon)t$ and $h(t)=1-(1-t)^{k-1}$.  On the state $(a,b)$ define
\begin{align}
 \widetilde g_Z(a,b)&=(h(a),h(b))^{\mathsf T},\notag\\
 \widetilde f_Z(u,v;\epsilon)&=(u^{j_Z-1}q_\epsilon(v)^k,
                                  u^{j_Z}q_\epsilon(v)^{k-1})^{\mathsf T}.
 \label{eq:reduced-z-maps}
\end{align}
For $j_Z\geq2$, these maps form a fixed-channel potential system with
\begin{align}
 \widetilde D_Z&=\operatorname{diag}(j_Z,k(1-\epsilon)),\notag\\
 \widetilde F_Z(u,v;\epsilon)&=u^{j_Z}q_\epsilon(v)^k,\notag\\
 \widetilde G_Z(a,b;\epsilon)&=j_Z\left(a-\frac{1-(1-a)^k}{k}\right)
 +(1-\epsilon)\left(kb-1+(1-b)^k\right).
 \label{eq:reduced-z-primitives}
\end{align}
Its potential satisfies, throughout $[0,1]^2$,
\begin{equation}
 \widetilde U_Z(a,b;\epsilon)=U_Z((a,b,\epsilon)^{\mathsf T};\epsilon).
 \label{eq:reduced-z-potential-equality}
\end{equation}
\end{proposition}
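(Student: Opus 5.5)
The proof is a direct computation in three parts: check the admissibility conditions of \Cref{def:admissible-vector-system}, verify the two gradient identities, and compare potentials term by term after substituting $c=\epsilon$.

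\textbf{Admissibility.} On $[0,1]^2$ both maps are $C^2$ and coordinatewise nondecreasing. For $\widetilde g_Z$ this holds because $h$ is increasing. For $\widetilde f_Z$ it holds because $u^{j_Z-1}$, $u^{j_Z}$ and $q_\epsilon(v)$ are nonnegative and nondecreasing. Both maps take values in $[0,1]^2$. We have $\widetilde g_Z(0,0)=0$, and $\widetilde f_Z(0,v)=0$ because $j_Z-1\geq1$; this is exactly where $j_Z\geq2$ is used. The matrix $\widetilde D_Z$ is positive diagonal since $\epsilon<1$, and $\widetilde F_Z(0,0)=0$ and $\widetilde G_Z(0,0)=0$ are immediate.

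\textbf{Gradient identities.} For $\widetilde F_Z$:
\[
\partial_u\widetilde F_Z=j_Zu^{j_Z-1}q_\epsilon^k,\qquad
\partial_v\widetilde F_Z=k(1-\epsilon)u^{j_Z}q_\epsilon^{k-1},
\]
which equals $\widetilde D_Z\widetilde f_Z$. For $\widetilde G_Z$:
\[
\partial_a\widetilde G_Z=j_Z\bigl(1-(1-a)^{k-1}\bigr),\qquad
\partial_b\widetilde G_Z=(1-\epsilon)k\bigl(1-(1-b)^{k-1}\bigr),
\]
which equals $\widetilde D_Z\widetilde g_Z$.

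\textbf{Potential equality.} Set $c=\epsilon$ in the original maps. Then $\hat a=h(a)$, $\hat b=1-(1-\epsilon)(1-b)^{k-1}=q_\epsilon(h(b))$, and $\hat c=1-(1-b)^k$. Writing $u=h(a)$ and $v=h(b)$, compare the three pieces of $U_Z$ with those of $\widetilde U_Z$.
\begin{itemize}
\item Quadratic term: $g_Z^TD_Zx_Z=j_Zah(a)+kbq_\epsilon(v)+\epsilon\hat c$, versus $\widetilde g_Z^T\widetilde D_Z x=j_Zah(a)+k(1-\epsilon)bv$. Since $q_\epsilon(v)=\epsilon+(1-\epsilon)v$, the difference is $k\epsilon b+\epsilon\hat c$.
\item $F$-term: $F_Z(g_Z(x))=h(a)^{j_Z}q_\epsilon(v)^k+\epsilon\hat c$, whose first summand is exactly $\widetilde F_Z(u,v)$. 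The difference is $\epsilon\hat c$.
\item $G$-term: $G_Z-\widetilde G_Z=kb-(1-\epsilon)\hat c-(1-\epsilon)(kb-\hat c)=\epsilon kb$.
\end{itemize}
Hence
\[
U_Z-\widetilde U_Z=(k\epsilon b+\epsilon\hat c)-\epsilon kb-\epsilon\hat c=0 .
\]

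The only delicate point is bookkeeping in the $b$-coordinate. The weight $k(1-\epsilon)$ and the $\epsilon$-dependent $\widetilde G_Z$ must absorb the $c=\epsilon$ contributions, and the cancellation of $\epsilon\hat c$ between the $D$-term and the $F$-term has to be tracked carefully.
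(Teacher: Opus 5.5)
Your proposal is correct and follows essentially the same route as the paper: check admissibility (with $j_Z\geq2$ giving $\widetilde f_Z(0,0)=0$ and $\epsilon<1$ giving a positive $\widetilde D_Z$), differentiate the primitives, and verify the potential identity by substituting $c=\epsilon$ and using $\hat b=q_\epsilon(h(b))$. Your term-by-term difference bookkeeping is an organized version of the paper's direct expansion, and the cancellations you record are correct.
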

\begin{proof}
Both maps preserve the cube and are monotone, and their values at zero vanish.  Differentiating \eqref{eq:reduced-z-primitives} gives the two required primitive identities.  Moreover $q_\epsilon(h(b))=1-(1-\epsilon)(1-b)^{k-1}$, so the reduced recursion is exactly the $(a,b)$ recursion with $c=\epsilon$.  Expanding its potential gives
\[
 j_Za\hat a-j_Za+\frac{j_Z}{k}[1-(1-a)^k]
 +kb\hat b-kb+(1-\epsilon)[1-(1-b)^k]-\hat a^{j_Z}\hat b^k,
\]
which is \eqref{eq:ic-z-potential} at $c=\epsilon$.  The diagonal matrix is positive for every fixed $\epsilon<1$; its channel dependence is permitted by \Cref{thm:standard-vector-saturation}.
\end{proof}

We now define the fixed-point classes used in the two constituent potential
thresholds.  The terminology follows the bounded-degree MN/HA potential
analysis in \cite{ObataJianKasaiPfister2013}; it classifies solutions of the
fixed-point equations \eqref{eq:ic-z-de} and \eqref{eq:ic-x-de}, and does not
add a channel assumption.

\begin{definition}[Successful, trivial, and nontrivial constituent fixed points]\label{def:constituent-fixed-point-classes}
Fix \(\epsilon\).  Let \(\mathcal F_Z(\epsilon)\) be the fixed-point set of
\eqref{eq:ic-z-de}.  The Z-side successful and trivial fixed-point sets are
\[
      \mathcal S_Z(\epsilon)
      =\{\bm{x}_Z\in\mathcal F_Z(\epsilon):a=b=0\},
      \qquad
      \mathcal T_Z(\epsilon)
      =\{\bm{x}_Z\in\mathcal F_Z(\epsilon):a=b=1\}.
\]
Here \(c=\epsilon\) at every Z-side fixed point.  The Z-side nontrivial
fixed-point set is
\[
      \mathcal N_Z(\epsilon)
      =\mathcal F_Z(\epsilon)\setminus
        \bigl(\mathcal S_Z(\epsilon)\cup\mathcal T_Z(\epsilon)\bigr).
\]
The Z-side non-successful fixed-point set is
\[
      \mathcal F_Z^\star(\epsilon)
      =
      \mathcal F_Z(\epsilon)\setminus\mathcal S_Z(\epsilon)
      \qquad
      \left(=\mathcal T_Z(\epsilon)\cup\mathcal N_Z(\epsilon)\right).
\]

Similarly, let \(\mathcal F_X(\epsilon)\) be the fixed-point set of
\eqref{eq:ic-x-de}.  The X-side successful and trivial fixed-point sets are
\[
      \mathcal S_X(\epsilon)
      =\{\bm{x}_X\in\mathcal F_X(\epsilon):d=e=0\},
      \qquad
      \mathcal T_X(\epsilon)
      =\{\bm{x}_X\in\mathcal F_X(\epsilon):d=1,\ e=\epsilon\}.
\]
The X-side nontrivial fixed-point set is
\[
      \mathcal N_X(\epsilon)
      =\mathcal F_X(\epsilon)\setminus
        \bigl(\mathcal S_X(\epsilon)\cup\mathcal T_X(\epsilon)\bigr).
\]
The X-side non-successful fixed-point set is
\[
      \mathcal F_X^\star(\epsilon)
      =
      \mathcal F_X(\epsilon)\setminus\mathcal S_X(\epsilon)
      \qquad
      \left(=\mathcal T_X(\epsilon)\cup\mathcal N_X(\epsilon)\right).
\]
\end{definition}

The successful fixed-point sets represent decoding success and are excluded
from the energy-gap minimization.  The trivial fixed-point sets are saturated
erasure branches whose potentials vanish at the corresponding constituent
hashing-bound values.  The nontrivial fixed-point sets contain all remaining
fixed points.

With the convention \(\inf\emptyset=+\infty\), define
\begin{equation}
   \Delta E_Z(\epsilon)
      =\inf_{\bm{x}_Z\in\mathcal F_Z^\star(\epsilon)}
          U_Z(\bm{x}_Z;\epsilon),
      \qquad
   \epsilon_{\mathrm{pot},Z}
      =\sup\{\epsilon_0\in[0,1]:\Delta E_Z(\epsilon)>0\ \text{for all }0\leq\epsilon<\epsilon_0\}.  \label{eq:ic-z-potential-threshold}
\end{equation}
Similarly, define
\begin{equation}
   \Delta E_X(\epsilon)
      =\inf_{\bm{x}_X\in\mathcal F_X^\star(\epsilon)}
          U_X(\bm{x}_X;\epsilon),
      \qquad
   \epsilon_{\mathrm{pot},X}
      =\sup\{\epsilon_0\in[0,1]:\Delta E_X(\epsilon)>0\ \text{for all }0\leq\epsilon<\epsilon_0\}.  \label{eq:ic-x-potential-threshold}
\end{equation}
The threshold used for the five-message recursion is
\begin{equation}
   \epsilon_{\mathrm{pot}}
      :=
      \min\{\epsilon_{\mathrm{pot},Z},
             \epsilon_{\mathrm{pot},X}\}.
      \label{eq:joint-potential-threshold}
\end{equation}

\begin{lemma}[Isolation of success below the candidate thresholds]\label{lem:isolated-success}
For $2\leq j_Z<k$ and $0\leq\epsilon<j_Z/k$, the successful fixed point of the reduced Z-side system is isolated.  For $k\geq3$, the X-side successful fixed point is isolated for every $\epsilon$.  Consequently, at a fixed such channel, positive potential at every nonzero fixed point implies a strictly positive energy gap.
\end{lemma}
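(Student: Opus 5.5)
I will prove isolation by linearizing each reduced map $x\mapsto f(g(x);\epsilon)$ at the origin and showing that the composite map strictly contracts a neighborhood of $\bm0$. Once there is such a neighborhood with no nonzero fixed point, compactness will turn pointwise positivity into a positive infimum.

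\emph{Z-side.} For the reduced system of \Cref{prop:reduced-z-system} we have $h(t)=(k-1)t+O(t^2)$. Hence $u=h(a)$ and $v=h(b)$ are $O(|a|+|b|)$, and $q_\epsilon(v)=\epsilon+(1-\epsilon)v$. The first component of $\widetilde f_Z(\widetilde g_Z(a,b))$ is $h(a)^{j_Z-1}q_\epsilon(h(b))^k$. Since $j_Z\geq2$, this is at most $C\,h(a)\le C'|a|$ and vanishes to first order in $a$. More precisely, its Jacobian at the origin has entry $\partial/\partial a$ equal to $(k-1)\epsilon^k$ when $j_Z=2$, and $0$ otherwise. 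The second component $h(a)^{j_Z}q_\epsilon(h(b))^{k-1}$ is $O(a^2)$ near $a=0$ and carries no linear term. The linearization is therefore upper triangular. Its only possibly nonzero diagonal entry is $(k-1)\epsilon^{k}$, for $j_Z=2$, and $(k-1)\epsilon^k<1$ holds since $\epsilon<j_Z/k\le1$ gives $\epsilon^k<1/(k-1)$; I will check this numerically/analytically for the relevant range.

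\emph{The obstacle.} Nilpotent or small linear parts do not by themselves exclude fixed points. So I will argue directly. Suppose $(a,b)$ is a nonzero fixed point near $0$. From $b=h(a)^{j_Z}q^{k-1}\le (ka)^{j_Z}$ we get $b=O(a^2)$, and then $a=h(a)^{j_Z-1}q^k$.
\begin{itemize}
\item If $j_Z\geq3$, this reads $a\le (ka)^2$, which forces $a=0$ or $a\ge k^{-2}$.
\item If $j_Z=2$, it reads $a\le h(a)\,q_\epsilon(h(b))^k$, with $q_\epsilon(h(b))\to\epsilon$ as $(a,b)\to0$. Since $h(a)\le(k-1)a$, we need $(k-1)q^k\ge1$, which fails near $0$ whenever $(k-1)\epsilon^k<1$.
\end{itemize}
The hypothesis $\epsilon<j_Z/k$ is therefore used only through $(k-1)\epsilon^k<1$. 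For $j_Z=2$ this reads $(k-1)(2/k)^k<1$, which holds for $k\ge3$. Making this elementary inequality precise is the step I expect to need care.

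\emph{X-side.} For the X-side, $d^+=\hat d^{k-1}$ with $\hat d\le (j_X-1)d+ke$, and $e^+=\epsilon\hat e^{k-1}$ with $\hat e\le j_Xd+(k-1)e$. Since $k-1\ge2$, every fixed point satisfies $d+e\le C(d+e)^2$. This forces $d+e=0$ or $d+e\ge 1/C$, with $C$ independent of $\epsilon$, so isolation holds for every $\epsilon$.

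\emph{Energy gap.} Finally, the nonsuccessful fixed-point set $\mathcal F^\star(\epsilon)$ is closed in the compact cube by continuity of $f\circ g$, because it excludes a neighborhood of $\bm0$ by isolation. The potential $U$ is continuous. If $U>0$ on this compact set, its infimum is attained and positive, so $\Delta E(\epsilon)>0$. If the set is empty, $\Delta E=+\infty>0$.
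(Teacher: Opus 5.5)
Your proof is correct and has the same outline as the paper's. You compute the same Jacobian at the origin: it is zero for $j_Z\geq3$ and on the X side, and equals $\operatorname{diag}((k-1)\epsilon^k,0)$ for $j_Z=2$, where $(k-1)(2/k)^k<1$. You finish with the same compactness step.

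The only real difference is how you pass from the Jacobian to isolation. The paper notes that $I-T'(0)$ is nonsingular. You instead derive explicit inequalities: $a\leq k^2a^2$ for $j_Z\geq3$, $d+e\leq C(d+e)^2$ on the X side, and $(k-1)q_\epsilon(h(b))^k\geq1$ for $j_Z=2$. This gives concrete isolation radii, uniform in $\epsilon$ for $j_Z\geq3$ and on the X side, but it is not needed.

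Your stated reason for avoiding linearization is wrong. You claim that nilpotent or small linear parts do not by themselves exclude fixed points. Since the composite maps are polynomial, $x-T(x)=(I-T'(0))x+o(\|x\|)$. Whenever $1$ is not an eigenvalue of $T'(0)$, for instance when $T'(0)$ is nilpotent or has spectral radius below one, $I-T'(0)$ is invertible. Then $x-T(x)\neq0$ for all small $x\neq0$, so the origin is isolated. That is exactly the paper's one-line argument.

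Your first justification of $(k-1)\epsilon^k<1$, from $\epsilon<j_Z/k\leq1$ alone, is also invalid in general: $\epsilon$ near $1$ and large $k$ violate it. This is harmless, because that diagonal entry is nonzero only when $j_Z=2$, and there you correctly use $\epsilon<2/k$.
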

\begin{proof}
At the reduced Z-side origin, the recursion Jacobian is zero if $j_Z\geq3$.  If $j_Z=2$, its only possible nonzero eigenvalue is $(k-1)\epsilon^k<1$: indeed $(k-1)(2/k)^k<1$ for $k\geq3$.  For the X side the Jacobian at the origin is zero when $k\geq3$.  Thus $I-T'(0)$ is nonsingular for both recursion maps $T$, and the origin is an isolated solution of $x-T(x)=0$.  The remaining fixed-point set is closed and bounded away from the origin, hence compact.  A continuous strictly positive potential on that set has positive minimum.
\end{proof}

\section{Exact Constituent Potential Positivity}\label{sec:constituent-positivity}

We prove positivity for the finite degree range $2\leq j<k\leq60$.  The proof treats every physical nontrivial fixed point, including all erasure parameters in $[0,1]$.  This stronger channel range permits the complementary-channel transformation on the HA side.

\subsection{Classical Degree Correspondence and Existing Results}\label{sec:classical-correspondence}

In the classical MN notation $(\ell,r,g)$ of \cite{OkazakiKasai2014}, the BEC recursion is
\begin{align}
 x_1^+&=\left[1-(1-x_1)^{r-1}(1-x_2)^g\right]^{\ell-1},\notag\\
 x_2^+&=\epsilon\left[1-(1-x_1)^r(1-x_2)^{g-1}\right]^{g-1}.
 \label{eq:classical-mn-correspondence}
\end{align}
Setting $(x_1,x_2)=(d,e)$ and $(\ell,r,g)=(k,j_X,k)$ gives \eqref{eq:ic-x-de}.  Its primitives and diagonal weight are precisely $F_X,G_X,D_X$ in \Cref{sec:potential}; hence $U_X$ is the classical MN potential with the same normalization.

For the Z side, eliminating $c=\epsilon$ gives the classical HA recursion with outer LDPC degrees $(j_Z,k)$ and square LDGM degrees $(k,k)$.  In HA notation $(\ell,r,g)$ this is $(j_Z,k,k)$, whose dual MN degree tuple is $(k,j_Z,k)$ \cite{KasaiSakaniwa2011,ObataJianKasaiPfister2013}.  The reduction in \Cref{prop:reduced-z-system} preserves its potential.  Thus both constituent positivity questions are classical.

More explicitly, at a Z-side fixed point put $u=\hat a$, $v=\hat b$, and $\delta=1-\epsilon$.  Then $(1-u,1-v)$ is a fixed point of the dual MN recursion at channel $\delta$: its check messages are $(1-a,1-b)$, whose variable updates return $(1-u,1-v)$.  Let $U_{\rm MN}^{(k,j_Z,k)}$ denote its classical potential with the normalization used above.  Expanding both potentials at these corresponding fixed points gives
\begin{equation}
 U_Z((a,b,\epsilon)^{\mathsf T};\epsilon)
 =U_{\rm MN}^{(k,j_Z,k)}(1-u,1-v;1-\epsilon)
   +\frac{j_Z}{k}-\epsilon.
 \label{eq:classical-ha-mn-duality}
\end{equation}
For $\epsilon<j_Z/k$, the dual channel is above $1-j_Z/k$.  A duality-based proof must therefore control the corresponding MN branch at this complementary channel; positivity only below the dual MN threshold would not suffice.  Positivity on all physical nontrivial MN fixed-point branches does suffice, because the last term in \eqref{eq:classical-ha-mn-duality} is then positive.

Positivity and capacity-achieving spatial coupling have already been proved for particular bounded-degree classical families.  Obata et al.\ prove the result for $(\ell,2,2)$ MN and $(2,\ell,2)$ HA ensembles, $\ell\geq3$ \cite{ObataJianKasaiPfister2013}.  Okazaki and Kasai prove nontrivial fixed-point potential positivity for $(\ell,3,3)$ MN ensembles, $\ell\geq3$ \cite[Sec. III--IV]{OkazakiKasai2014}.  Fukushima et al.\ explicitly collect the BEC positivity results for $(r,g)=(2,2)$ and $(3,3)$ and extend the corresponding potential-threshold result to generalized erasure channels \cite[Lemma 2 and Theorem 2]{FukushimaOkazakiKasai2015}.

For $(j_Z,j_X,k)=(4,8,12)$, the X constituent is MN $(12,8,12)$ and the Z constituent is HA $(4,12,12)$, with dual MN tuple $(12,4,12)$.  The cited $(\ell,2,2)$ and $(\ell,3,3)$ families alone do not cover these tuples.  The exact verification below covers them and all other degree pairs in the stated finite range.

\subsection{A Completed Exact Positivity Certificate}\label{sec:exact-certificate}

\begin{theorem}[Classical MN positivity at explicit finite degrees]\label{thm:certified-mn-positivity}
For every pair of integers $2\leq j<k\leq60$, the classical MN $(k,j,k)$ potential satisfies
\[
 U_{\rm MN}^{(k,j,k)}(d,e;\eta)>10^{-6}
\]
at every physical fixed point with $0<d<1$, for every $\eta\in[0,1]$.  Here physical means $d,e,\eta\in[0,1]$.  This statement is established by the exact rational certificate supplied with the source and the verification described below.
\end{theorem}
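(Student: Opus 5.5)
The plan is to reduce the two-dimensional fixed-point problem to a one-parameter curve, and then certify positivity of the potential along that curve by exact rational interval arithmetic, one degree pair $(j,k)$ at a time. For the MN $(k,j,k)$ recursion \eqref{eq:classical-mn-correspondence} with $(x_1,x_2)=(d,e)$, I would introduce the complemented check message $y=1-\hat d\in(0,1)$, so that the first fixed-point equation becomes $d=(1-y)^{k-1}$, a polynomial in $y$. The same equation then fixes $\tau=1-e$ through the monotone constraint
\[
 \tau^k(1-d)^{j-1}=y .
\]
The second equation determines the channel explicitly, $\eta=e/\hat e^{\,k-1}$ with $\hat e=1-(1-d)^j\tau^{k-1}$. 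So every physical fixed point with $0<d<1$ corresponds to a parameter $y$ for which $\tau\in[0,1]$ and $\eta\in[0,1]$, and the potential $U_{\rm MN}^{(k,j,k)}$, built from $F_X,G_X,D_X$ of \Cref{sec:potential}, becomes a single function $\varphi(y)$ on the admissible set. This removes the quantifier over $\eta$: the statement for all $\eta\in[0,1]$ is exactly positivity of $\varphi$ on the admissible $y$-set.

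Next I would describe the admissible set. Physicality ($\tau\le1$, $\eta\le1$) cuts it to a finite union of $y$-intervals, whose endpoints are localized by rational bisection on sign changes of the defining polynomials. The two ends of the branch need separate analytic treatment before any subdivision can work.
\begin{itemize}
\item Near $d\to0$: as in \Cref{lem:isolated-success}, the Jacobian at the origin vanishes for $k\ge3$. An explicit comparison, for instance $d\le(\text{const}\cdot(d+e))^{k-1}$ together with $e\le\eta\,\hat e^{\,k-1}$, yields a rational $d_0>0$ below which no nonzero fixed point exists. This gives a computable cutoff $y\le y_{\max}$.
\item Near $d\to1$: the branch approaches the trivial point, where the potential has a closed form. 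I would bound $\varphi$ from below by a Taylor expansion in $y$ with explicit rational remainder, using that $\eta\le1$ forces $y$ away from $0$ at a computable rate.
\end{itemize}

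On the remaining compact $y$-range, the certificate consists of a list of rational subintervals. On each subinterval I would do the following.
\begin{itemize}
\item Enclose $\tau$ by rational bisection of the monotone constraint. This is legitimate because $\tau^k$ is increasing and the right side is explicit, so the enclosure is exact.
\item Enclose $d,e,\hat d,\hat e,\eta$ by monotone interval evaluation of the polynomial expressions.
\item Bound $\varphi$ from below, either by naive interval evaluation of the expanded potential or, where that is too loose, by a first-order mean-value form whose derivative bound is itself computed by interval arithmetic.
\end{itemize}
The certificate records the subdivision points, and the verifier recomputes every enclosure in integer numerator/denominator arithmetic and checks that the lower bound exceeds $10^{-6}$, repeating this for all $1711$ pairs with $2\le j<k\le60$.

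The main obstacle will be the endpoint regimes and the sharpness of the enclosures when $\varphi$ is small. For large $k$, quantities like $(1-y)^{59}$ and $\tau^{60}$ vary over many orders of magnitude. The margin $\varphi-10^{-6}$ may also be thin wherever the branch passes close to the MN threshold, which is where $\varphi$ attains its minimum over $\eta$.

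There I would first try adaptive subdivision with mean-value (centered) forms, refining only where the naive bound fails. If adaptive subdivision is still insufficient, I would switch to the parametrization by $\tau$ near the $\eta=1$ end, where the dependence is better conditioned. Exact rationals keep the arithmetic correct, so the remaining cost is only the number of intervals, which the adaptivity should keep manageable.
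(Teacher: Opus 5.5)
Your architecture matches the paper's. You parametrize the fixed points by the first check message; the paper uses $u=\hat d=1-y$. Then $d=u^{k-1}$, $t=1-e$ solves $t^k=(1-u)/(1-u^{k-1})^{j-1}$, and $\eta$ is read off from $e=\eta\hat e^{\,k-1}$. You exclude a neighbourhood of $d=0$ analytically; the paper's bound $u\le(j-1)d+ke$ with $e\le\hat e^{\,k-1}\le(u+d)^{k-1}$ is the version of your comparison that gives $u>a_{j,k}$ directly. You then certify the rest cell by cell in exact rational arithmetic.

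What you miss is the simplification that makes the cells cheap. Substituting $\eta\hat e^{\,k}=e\hat e$ into the potential gives the exact identity $U=t+B(u)-(k-1)A(u)/t$, with $A(u)=(1-u)(1-u^{k-1})$ and $B$ an explicit polynomial. This is increasing in $t$ and decreasing in $A$, and $B$ has an obvious endpoint lower bound on $[L,R]$. So each cell needs only one dyadic witness $\tau$ with $\tau^k\le T_{L,R}\le t^k$ and a single inequality. Consequently:
\begin{itemize}
\item $\eta$ never has to be enclosed, so there is no division by the tiny $\hat e^{\,k-1}$ near the left exclusion;
\item only a one-sided bound on $t$ is needed;
\item the admissible set need not be localized, because the paper certifies the closed form on the superset $[a_{j,k},b_{j,k}]$, unphysical points included.
\end{itemize}
Your generic interval and mean-value scheme is sound in principle, but much heavier.

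The step that is wrong as stated is your treatment of the $d\to1$ end.

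For $j\ge3$ there is no such end. Write $S_k(u)=1+u+\cdots+u^{k-2}$; then $t^k=1/[(1-u)^{j-2}S_k(u)^{j-1}]\ge2$ once $u\ge1-[2(k-1)^2]^{-1}$. So $\tau\le1$ alone gives an explicit cutoff, and the branch stays bounded away from the trivial point. On that side it leaves the physical region only where $\tau=1$, that is, at $e=0$ and $\eta=0$.

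For $j=2$ the branch does reach $d\to1$, but $\eta\le1$ does not push $y$ away from $0$: along the branch, $\eta\to1-(k-1)^{-1/k}<1$. Moreover $\tau^k=y/(1-(1-y)^{k-1})$ is a $0/0$ form there. Naive interval division on any cell, or any Taylor remainder, touching $y=0$ therefore breaks down.

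The repair is the paper's cancellation $t^k=1/S_k(u)$, which is continuous at $u=1$ with $S_k(1)=k-1$. The endpoint potential is $(k-1)^{-1/k}-2/k>0$, so the last cell can simply include $u=1$. The constraint $\eta\le1$ is actually needed at the opposite end, in the left exclusion.
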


\begin{proof}
Put $u=\hat d$, $t=1-e$, and $h=\hat e$.  The fixed-point equations give
\begin{align}
 d&=u^{k-1},\qquad
 t^k=\frac{1-u}{(1-u^{k-1})^{j-1}},\qquad
 e=\eta h^{k-1},\notag\\
 A(u)&=(1-u)(1-u^{k-1}),\qquad h=1-A(u)/t.
 \label{eq:certified-mn-parameter}
\end{align}
For $0<d<1$, we have $0<u<1$ and $t>0$; $e=1$ would force $d=1$.  Substitution of $\eta h^k=eh$ into the potential gives the exact identity
\begin{align}
 U_{\rm MN}^{(k,j,k)}&=t+B(u)-\frac{(k-1)A(u)}{t},\notag\\
 B(u)&=(k-2)-(k-2)u-(k+j-2)u^{k-1}
       +\left(k+j-2-\frac jk\right)u^k.
 \label{eq:certified-mn-potential}
\end{align}
This parameterization includes all physical fixed points with $0<d<1$, without a stability assumption.

First exclude an interval next to zero.  Since $t\leq1$, the check message satisfies $h\leq1-A(u)\leq u+d$.  Also $e\leq h^{k-1}$, because $\eta\leq1$.  The first check equation and the elementary union bound therefore give
\[
 u\leq(j-1)d+ke
 \leq(j-1)u^{k-1}+k(u+u^{k-1})^{k-1}.
\]
Every physical fixed point with $u>0$ consequently satisfies
\begin{equation}
 1\leq u^{k-2}\left[j-1+k(1+u^{k-2})^{k-1}\right].
 \label{eq:certificate-left-exclusion}
\end{equation}
The right side is increasing in $u$.  The certificate records a positive dyadic rational $a_{j,k}\leq1/2$ at which it is strictly less than one, and the verifier checks this inequality exactly.  Hence $u>a_{j,k}$.

For $j\geq3$, write $S_k(u)=1+u+\cdots+u^{k-2}$.  The condition $t\leq1$ in \eqref{eq:certified-mn-parameter} requires
\[
 1\leq(1-u)^{j-2}S_k(u)^{j-1}.
\]
Set $b_{j,k}=1-[2(k-1)^2]^{-1}$.  If $u\geq b_{j,k}$, the right side is at most
\[
 (1-b_{j,k})^{j-2}(k-1)^{j-1}
   =\frac{(k-1)^{3-j}}{2^{j-2}}\leq\frac12,
\]
a contradiction.  Thus $u<b_{j,k}$.  For $j=2$, use $b_{2,k}=1$ and the cancellation
\[
 t^k=S_k(u)^{-1}.
\]
This expression extends continuously to $u=1$ with $S_k(1)=k-1$, so interval bounds can include that endpoint without division by zero.

For each rational subinterval $[L,R]\subset[a_{j,k},b_{j,k}]$, define
\begin{align}
 T_{L,R}&=\frac{1}{(1-L)^{j-2}S_k(R)^{j-1}},\notag\\
 A_{L,R}&=(1-L)(1-L^{k-1}),\notag\\
 B_{L,R}&=(k-2)-(k-2)R-(k+j-2)R^{k-1}
       +\left(k+j-2-\frac jk\right)L^k.
 \label{eq:certificate-cell-bounds}
\end{align}
Monotonicity of the factors gives $t^k\geq T_{L,R}$, $A(u)\leq A_{L,R}$, and $B(u)\geq B_{L,R}$.  The certificate stores a positive integer $m$ for which
\begin{equation}
 \tau=\frac{m}{2^{40}}>0,\qquad
 \tau^k\leq T_{L,R},\qquad
 \tau+B_{L,R}-\frac{(k-1)A_{L,R}}{\tau}>10^{-6}.
 \label{eq:certificate-exact-test}
\end{equation}
Each comparison uses only rational arithmetic, equivalently integer inequalities after clearing positive denominators.  Since $t+B-(k-1)A/t$ is increasing in $t>0$ and decreasing in $A\geq0$, \eqref{eq:certificate-exact-test} implies the claimed potential bound throughout the interval.

The supplied certificate contains every pair $2\leq j<k\leq60$, exactly once, and a finite partition of $[a_{j,k},b_{j,k}]$ for each pair.  Intervals are encoded by a dyadic subdivision depth and index.  The independent verifier reconstructs their rational endpoints, checks coverage without gaps or overlaps, verifies \eqref{eq:certificate-left-exclusion}, and checks \eqref{eq:certificate-exact-test} on every cell.  All 1711 degree pairs and all 384895 rational cells pass these exact checks.  The largest dyadic subdivision depth is 11.
  These completed checks, together with the endpoint exclusions, establish the bound at every physical nontrivial fixed point.
\end{proof}

\paragraph{Reproducibility of the exact proof.}
The exact certificate, independent verifier, and reproduction data are available in the \href{https://github.com/kasaikenta/mn-ha-css-erasure-certificates}{public GitHub repository} (\href{https://github.com/kasaikenta/mn-ha-css-erasure-certificates/tree/4d030d1faee2986a6e6a4cd4e14f1a87229ce31f}{commit \texttt{4d030d1faee2}}) and included in the source package.  The verifier \path{exact/verify_finite_family.py} uses only integer and rational arithmetic, without importing the generator or invoking a numerical root finder.  The interval inequalities certify the full continuous domains.  The repository README supplies SHA-256 checksums and commands to regenerate and independently verify the certificate.

The theorem is restricted to the checked degree range.  Extending the upper degree limit requires further exact verification or an analytic bound uniform in the degrees.

\section{Seeded Convergence and X/Z Equal-Rate Families}\label{sec:threshold-saturation}

The exact MN bound and the HA--MN duality identity give the constituent potential thresholds for every degree pair in the verified range.  The MN fixed points with $d=0$ or $d=1$ are $(d,e)=(0,0)$ and $(1,\eta)$: $d=0$ forces $e=0$, while $d=1$ forces $e=\eta$ and $e=1$ forces $d=1$.  Points with $e=0$ and $0<d<1$ are included in \Cref{thm:certified-mn-positivity}.  On the HA side, the fixed-point equations force $b=0$ when $a=0$ and $b=1$ when $a=1$; every other fixed point below $\epsilon=1$ has $0<a,b,\hat a,\hat b<1$ and therefore maps to a nontrivial dual MN fixed point.

\begin{theorem}[Z-side potential threshold]\label{thm:z-constituent-ic-potential}
For integers \(2\leq j_Z<k\leq60\),
\[
      \epsilon_{\mathrm{pot},Z}=\frac{j_Z}{k}.
\]
\end{theorem}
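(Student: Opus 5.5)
The plan is to prove two inequalities, $\epsilon_{\mathrm{pot},Z}\geq j_Z/k$ and $\epsilon_{\mathrm{pot},Z}\leq j_Z/k$, by sorting the non-successful Z-side fixed points $\mathcal F_Z^\star(\epsilon)$ into the trivial branch $\mathcal T_Z(\epsilon)$ and the nontrivial branch $\mathcal N_Z(\epsilon)$, and evaluating $U_Z$ on each.

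\emph{Trivial branch.} Take $a=b=1$, $c=\epsilon$. Then $\hat a=\hat b=\hat c=1$. Substituting into \eqref{eq:ic-z-potential} with $D_Z=\operatorname{diag}(j_Z,k,1)$, the linear term is $j_Z+k+\epsilon$. Further, $G_Z=j_Z(1-1/k)+k-(1-\epsilon)$ and $F_Z=1+\epsilon$. These combine to
\[
U_Z((1,1,\epsilon)^{\mathsf T};\epsilon)=\frac{j_Z}{k}-\epsilon .
\]
This is a routine check; it is also \eqref{eq:classical-ha-mn-duality} at the dual trivial point $(0,0)$, where the MN potential vanishes. So the trivial branch has positive potential exactly when $\epsilon<j_Z/k$.

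\emph{Nontrivial branch.} By the remark preceding the theorem, every $x_Z\in\mathcal N_Z(\epsilon)$ with $\epsilon<1$ has $0<a,b,\hat a,\hat b<1$. The HA--MN correspondence then sends it to a dual MN $(k,j_Z,k)$ fixed point $(1-u,1-v)$ at channel $\eta=1-\epsilon$, with $0<1-u<1$. Since $2\leq j_Z<k\leq60$, \Cref{thm:certified-mn-positivity} applies for every $\eta\in[0,1]$ and gives $U_{\rm MN}>10^{-6}$. By \eqref{eq:classical-ha-mn-duality},
\[
U_Z>10^{-6}+\frac{j_Z}{k}-\epsilon>0 \quad\text{for } \epsilon<j_Z/k .
\]
So every point of $\mathcal F_Z^\star(\epsilon)$ has strictly positive potential for $0\leq\epsilon<j_Z/k$.

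\emph{Lower bound.} Positivity at each point does not yet make the infimum positive. The uniform bound handles $\mathcal N_Z$, since its infimum is at least $j_Z/k-\epsilon>0$, and $\mathcal T_Z$ is a single point. If an extra safeguard is needed, \Cref{lem:isolated-success} gives compactness of the non-successful set for the reduced system, with $\widetilde U_Z=U_Z$ at $c=\epsilon$ by \Cref{prop:reduced-z-system}. Hence $\Delta E_Z(\epsilon)>0$ on $[0,j_Z/k)$, and $\epsilon_{\mathrm{pot},Z}\geq j_Z/k$.

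\emph{Upper bound.} The point $(1,1,\epsilon)$ is a fixed point of \eqref{eq:ic-z-de} for every $\epsilon$, since $a^+=b^+=1$ and $c^+=\epsilon$. At $\epsilon=j_Z/k$ its potential is $0$, so $\Delta E_Z(j_Z/k)\leq0$. Therefore no $\epsilon_0>j_Z/k$ can satisfy the defining condition, and $\epsilon_{\mathrm{pot},Z}\leq j_Z/k$.

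The main obstacle is the step passing to the dual MN fixed point. It requires verifying carefully that the change of variables $u=\hat a$, $v=\hat b$, $\delta=1-\epsilon$ maps HA fixed points exactly onto MN fixed points with $0<d<1$. It also requires establishing identity \eqref{eq:classical-ha-mn-duality} by direct expansion: write both potentials in terms of $(a,b,\hat a,\hat b)$ and use the fixed-point relations $a=\hat a^{j_Z-1}\hat b^k$ and $b=\hat a^{j_Z}\hat b^{k-1}$ to cancel the product terms. The constant $j_Z/k-\epsilon$ should then appear from the $1/k$ normalizations in $G_Z$ and $F_X$. I would carry out this expansion symbolically first, checking it against the trivial point as a sanity test.
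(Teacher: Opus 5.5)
Your proposal is correct and follows the paper's proof essentially step for step. The trivial branch gives $U_Z=j_Z/k-\epsilon$, the HA--MN duality together with \Cref{thm:certified-mn-positivity} at channel $1-\epsilon$ gives $U_Z>10^{-6}+j_Z/k-\epsilon$ on $\mathcal N_Z(\epsilon)$, hence $\Delta E_Z(\epsilon)=j_Z/k-\epsilon>0$ below $j_Z/k$, and the trivial point's nonpositive potential at $j_Z/k$ excludes a larger threshold. The appeal to \Cref{lem:isolated-success} is unnecessary given the uniform bound. In your planned check of \eqref{eq:classical-ha-mn-duality}, the fixed-point relations enter only through the identification $(\hat d,\hat e)=(1-a,1-b)$, after which the two potentials differ by exactly $j_Z/k-\epsilon$ by direct expansion.
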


\begin{proof}
The successful fixed point has \(a=b=0\) and \(c=\epsilon\); it has \(U_Z=0\) and is excluded from \(\mathcal F_Z^\star(\epsilon)\).  The Z-side trivial fixed point has \(a=b=1\) and \(c=\epsilon\).  Then \(\hat a=\hat b=\hat c=1\), and direct substitution into \eqref{eq:ic-z-potential} gives
\begin{align*}
  U_Z((1,1,\epsilon)^T;\epsilon)
    &= j_Z+k+\epsilon
       -\left(j_Z-\frac{j_Z}{k}+k-1+\epsilon\right)
       -(1+\epsilon) \\
    &= \frac{j_Z}{k}-\epsilon .
\end{align*}
Thus the trivial Z-side fixed point has positive potential for \(\epsilon<j_Z/k\) and reaches zero at \(\epsilon=j_Z/k\).

Every nontrivial HA fixed point maps under \eqref{eq:classical-ha-mn-duality} to a physical nontrivial MN $(k,j_Z,k)$ fixed point at channel $1-\epsilon$.  \Cref{thm:certified-mn-positivity} therefore gives
\[
 U_Z(\bm x_Z;\epsilon)>10^{-6}+\frac{j_Z}{k}-\epsilon
 \qquad(\bm x_Z\in\mathcal N_Z(\epsilon)).
\]
Together with the trivial branch, this gives $\Delta E_Z(\epsilon)=j_Z/k-\epsilon>0$ below $j_Z/k$.  At and above $j_Z/k$, the trivial fixed point has nonpositive potential, which directly excludes a larger potential threshold.  Hence $\epsilon_{\mathrm{pot},Z}=j_Z/k$; no MAP converse is needed for this potential-threshold calculation.
\end{proof}

The X-side threshold is obtained by the same logic, with the complementary
degree ratio \(1-j_X/k\) replacing the Z-side ratio \(j_Z/k\).

\begin{theorem}[X-side potential threshold]\label{thm:x-constituent-ic-potential}
For integers \(2\leq j_X<k\leq60\),
\[
      \epsilon_{\mathrm{pot},X}=1-\frac{j_X}{k}.
\]
\end{theorem}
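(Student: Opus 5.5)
The proof mirrors that of \Cref{thm:z-constituent-ic-potential}, but is more direct: no duality is needed, because by \Cref{sec:classical-correspondence} the X-side recursion \eqref{eq:ic-x-de} is literally the classical MN $(k,j_X,k)$ recursion at channel $\eta=\epsilon$. Its primitives and weight are exactly $F_X,G_X,D_X$, so $U_X(d,e;\epsilon)=U_{\rm MN}^{(k,j_X,k)}(d,e;\epsilon)$. I split $\mathcal F_X(\epsilon)$ into successful, trivial and nontrivial fixed points using the boundary analysis recorded at the start of \Cref{sec:threshold-saturation}. There, $d=0$ forces $e=0$ (the successful point, excluded from $\mathcal F_X^\star$). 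Also, $d=1$ forces $e=\epsilon$, and $e=1$ forces $d=1$. Hence every nontrivial fixed point has $0<d<1$, and the trivial set $\mathcal T_X(\epsilon)$ is the single point $(1,\epsilon)$.

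The first step evaluates the trivial branch. At $(d,e)=(1,\epsilon)$ we have $\hat d=\hat e=1$, so
\[
 g_X^{\mathsf T}D_Xx_X=j_X+k\epsilon,\qquad G_X(1,\epsilon)=j_X+k\epsilon-1,\qquad F_X(1,1;\epsilon)=\frac{j_X}{k}+\epsilon .
\]
Hence
\[
 U_X((1,\epsilon)^{\mathsf T};\epsilon)=1-\frac{j_X}{k}-\epsilon .
\]
This is positive exactly for $\epsilon<1-j_X/k$ and nonpositive for $\epsilon\geq 1-j_X/k$.

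The second step handles the nontrivial branch. Apply \Cref{thm:certified-mn-positivity} with $(j,k)=(j_X,k)$, which lies in the certified range $2\leq j_X<k\leq60$, and with $\eta=\epsilon\in[0,1]$. Every physical fixed point with $0<d<1$ then has $U_X>10^{-6}$, for every channel value. Combining the two branches gives
\[
 \Delta E_X(\epsilon)\geq\min\bigl\{1-j_X/k-\epsilon,\,10^{-6}\bigr\}>0
 \qquad\text{for } 0\leq\epsilon<1-j_X/k ,
\]
so $\epsilon_{\mathrm{pot},X}\geq1-j_X/k$. For $\epsilon\geq1-j_X/k$, the trivial fixed point lies in $\mathcal F_X^\star(\epsilon)$ with nonpositive potential, so $\Delta E_X(\epsilon)\leq0$ and the supremum in \eqref{eq:ic-x-potential-threshold} cannot exceed $1-j_X/k$.

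The only delicate point is the bookkeeping that the nontrivial set $\mathcal N_X(\epsilon)$ is exactly the set of physical fixed points with $0<d<1$. This includes points with $e=0$, and it must hold uniformly in $\epsilon$, so that the certificate applies with no stability or channel restriction. That classification is the boundary analysis already noted. The rest is the short substitution above, and no MAP converse is required.
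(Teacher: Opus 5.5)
Your proposal is correct and follows essentially the same route as the paper's proof. The paper likewise evaluates the trivial point $(1,\epsilon)$ to get $1-j_X/k-\epsilon$, applies the MN certificate with $(j,k)=(j_X,k)$ to get $\Delta E_X(\epsilon)\geq\min\{10^{-6},1-j_X/k-\epsilon\}$, and uses the nonpositive trivial potential for $\epsilon\geq1-j_X/k$ as the upper bound. Your explicit check that $\mathcal N_X(\epsilon)$ is exactly the set of fixed points with $0<d<1$ spells out what the paper records in the remarks opening the threshold section.
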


\begin{proof}
The successful fixed point \(d=e=0\) has \(U_X=0\) and is excluded.  The X-side trivial fixed point has \(d=1\) and \(e=\epsilon\).  Then \(\hat d=\hat e=1\), and
\begin{align*}
  U_X((1,\epsilon)^T;\epsilon)
    &= j_X+k\epsilon-(j_X+k\epsilon-1)
       -\left(\frac{j_X}{k}+\epsilon\right) \\
    &= 1-\frac{j_X}{k}-\epsilon .
\end{align*}
Thus the trivial X-side fixed point reaches zero at $\epsilon=1-j_X/k$.  By \Cref{thm:certified-mn-positivity}, every $\bm x_X\in\mathcal N_X(\epsilon)$ has $U_X(\bm x_X;\epsilon)>10^{-6}$.  Hence
\[
 \Delta E_X(\epsilon)\geq\min\{10^{-6},1-j_X/k-\epsilon\}>0
 \qquad(\epsilon<1-j_X/k).
\]
The trivial fixed point has nonpositive potential for every $\epsilon\geq1-j_X/k$, excluding a larger potential threshold.  Therefore $\epsilon_{\mathrm{pot},X}=1-j_X/k$.

\end{proof}

Consequently,
\begin{equation}
   \epsilon_{\mathrm{pot}}
      =
      \min\left\{\frac{j_Z}{k},\,1-\frac{j_X}{k}\right\}.       \label{eq:ic-product-potential-value}
\end{equation}

The five-message recursion separates into the Z-side and X-side constituent recursions.  Thus no additional five-message potential is needed.  The convergence proof below uses the reduced Z-side system in \Cref{prop:reduced-z-system} and the original two-dimensional X-side system.  Under the X/Z equal-rate condition \(j_Z+j_X=k\), the threshold in \eqref{eq:ic-product-potential-value} equals the hashing-bound parameter computed in \Cref{prop:rate}.

\begin{remark}[Role of the exact certificate]\label{rem:kasai-existing}
Classical positivity for the present degrees is established by \Cref{thm:certified-mn-positivity} and \eqref{eq:classical-ha-mn-duality}.  The convergence proof uses the resulting positive energy gaps.  No remainder nonnegativity assumption is needed.
\end{remark}

The preceding two threshold computations identify the two constituent
thresholds.  The next theorem packages them into the single threshold relevant
to the original five-message CSS recursion and records the simplification under
the X/Z equal-rate condition.

\begin{theorem}[Constituent potential threshold]\label{prop:constituent-potential-threshold}
For integers \(2\leq j_Z<j_X<k\leq60\),
\[
      \epsilon_{\mathrm{pot}}
      =
      \min\left\{\frac{j_Z}{k},\,1-\frac{j_X}{k}\right\}.
\]
If the additional X/Z equal-rate degree condition \(j_Z+j_X=k\) is imposed, then
\[
      \epsilon_{\mathrm{pot}}
       =\epsilon_{\mathrm{hash}}
       =\frac{1-R_Q^{\mathrm{des}}}{2}.
\]
In particular, $(j_Z,j_X,k)=(4,8,12)$ has $\epsilon_{\mathrm{pot}}=\epsilon_{\mathrm{hash}}=1/3$.
\end{theorem}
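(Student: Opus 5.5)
The statement is essentially a packaging of \Cref{thm:z-constituent-ic-potential} and \Cref{thm:x-constituent-ic-potential}, so the plan is to read off the two constituent thresholds, combine them through the definition \eqref{eq:joint-potential-threshold}, and then specialize with the rate formulas of \Cref{prop:rate}.

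\emph{First step.} Check that the hypotheses $2\leq j_Z<j_X<k\leq60$ imply the hypotheses of both constituent theorems. For the Z side this needs $2\leq j_Z<k\leq60$. For the X side it needs $2\leq j_X<k\leq60$, which holds because $j_X>j_Z\geq2$. Those theorems then give $\epsilon_{\mathrm{pot},Z}=j_Z/k$ and $\epsilon_{\mathrm{pot},X}=1-j_X/k$. Substituting these into \eqref{eq:joint-potential-threshold} gives the first display, which is already recorded as \eqref{eq:ic-product-potential-value}.

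This step is bookkeeping, not analysis. The positivity input on nontrivial branches lies entirely inside the constituent theorems, via \Cref{thm:certified-mn-positivity} and the duality identity \eqref{eq:classical-ha-mn-duality}. Because the five-message recursion decouples, no joint potential needs to be analyzed; $\epsilon_{\mathrm{pot}}$ is defined directly as the minimum of the two constituent thresholds.

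\emph{Equal-rate case.} Impose $j_Z+j_X=k$. Then $1-j_X/k=j_Z/k$, so both entries of the minimum coincide and $\epsilon_{\mathrm{pot}}=j_Z/k$. By \Cref{prop:rate}, $R_Q^{\mathrm{des}}=(j_X-j_Z)/k=1-2j_Z/k$ in this case, hence
\[
\epsilon_{\mathrm{hash}}=\frac{1-R_Q^{\mathrm{des}}}{2}=\frac{j_Z}{k}.
\]
This equals $\epsilon_{\mathrm{pot}}$.

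\emph{Example $(4,8,12)$.} Here $4+8=12$ and all entries lie in the verified range. The formula gives $\epsilon_{\mathrm{pot}}=4/12=1/3$, in agreement with the value $\epsilon_{\mathrm{hash}}=1/3$ already computed in \Cref{prop:rate}.

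There is no real obstacle in this argument. The one point to watch is that the threshold equality holds only at the level of the \emph{design} rate; the paper already flags this, so the proof should state the equality with $\epsilon_{\mathrm{hash}}$ as defined from $R_Q^{\mathrm{des}}$ and make no claim about the actual code rate.
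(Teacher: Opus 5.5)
Your proposal is correct and follows the paper's proof essentially verbatim: it combines \Cref{thm:z-constituent-ic-potential} and \Cref{thm:x-constituent-ic-potential} through the definition \eqref{eq:joint-potential-threshold}, then uses $R_Q^{\mathrm{des}}=(k-2j_Z)/k$ under $j_Z+j_X=k$ to identify $j_Z/k$ with $\epsilon_{\mathrm{hash}}$. Your explicit check that $j_X>j_Z\geq2$ places the X side in the verified degree range is a small detail the paper leaves implicit.
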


\begin{proof}
By the definition \eqref{eq:joint-potential-threshold} and
\Cref{thm:z-constituent-ic-potential,thm:x-constituent-ic-potential},
\[
      \epsilon_{\mathrm{pot}}
      =
      \min\left\{\frac{j_Z}{k},\,1-\frac{j_X}{k}\right\}.
\]
This proves the general threshold formula.  Under \(j_Z+j_X=k\), the two entries in the minimum are both \(j_Z/k\).  Since \(R_Q^{\mathrm{des}}=(j_X-j_Z)/k=(k-2j_Z)/k\),
\[
      \frac{j_Z}{k}
      =
      \frac{1-R_Q^{\mathrm{des}}}{2}
      =
      \epsilon_{\mathrm{hash}} .
\]
This proves the claimed equality.
\end{proof}

The exact certificate in \Cref{sec:exact-certificate} covers both constituents for every triple in \Cref{prop:constituent-potential-threshold}.

For the X/Z equal-rate example \((j_Z,j_X,k)=(4,8,12)\),
\Cref{fig:ic-potential-fixed-points} plots the constituent potentials on the
trivial fixed-point branches and on the numerically found nontrivial
fixed points.  The trivial branches cross zero at the proved potential threshold \(1/3\).  The nontrivial branches illustrate the strict positivity established by the exact certificate and duality.

\begin{figure}[!htbp]
\centering
\includegraphics[width=0.95\textwidth]{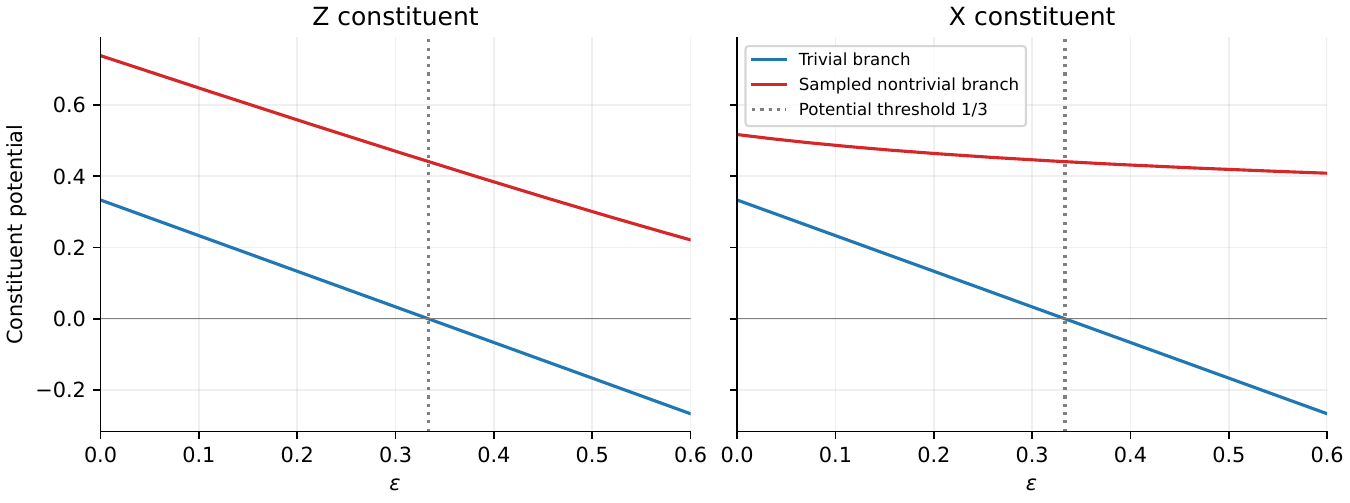}
\caption{Constituent potentials evaluated at fixed points for
\((j_Z,j_X,k)=(4,8,12)\).  The blue curve is the potential on the trivial
fixed-point branch.  The red curve is the potential on the nontrivial
fixed-point branch in
\(\mathcal N_Z(\epsilon)\) or \(\mathcal N_X(\epsilon)\), namely the fixed
points other than the successful and trivial ones in
\Cref{def:constituent-fixed-point-classes}.  The dotted vertical line marks the
proved potential threshold $1/3$.  The left and right panels are the Z and X constituents, respectively.}
\label{fig:ic-potential-fixed-points}
\end{figure}

\FloatBarrier
\Needspace{10\baselineskip}
\begin{theorem}[Convergence with an ideal message seed]\label{thm:joint-saturation}
Let $2\leq j_Z<j_X<k\leq60$ be integers.  Fix
\[
 0\leq\epsilon<\min\{j_Z/k,1-j_X/k\}.
\]
Consider the sectionwise recursion \eqref{eq:native-z-checks}--\eqref{eq:native-x-checks}, with $\epsilon_i=\epsilon$ off an interval $\mathcal S$ of at least $w$ consecutive sections and $\epsilon_i=0$ on it.  Clamp $a_i,b_i,d_i,e_i$ to zero on $\mathcal S$, and initialize these messages to one elsewhere.  For sufficiently large finite $w=w(\epsilon)$, both visible residuals tend to zero for every finite $L>|\mathcal S|$.

The same conclusion holds for the abstract common-window recursion that uses the averaging formulas \eqref{eq:general-sc-x-average}--\eqref{eq:general-seeded-sc-de} with the original three-message Z-side and two-message X-side maps, with all five messages clamped on $\mathcal S$.  On its unseeded sections the successful Z-side state is $(0,0,\epsilon)$, not the zero vector.
\end{theorem}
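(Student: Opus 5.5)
The plan is to decouple the five-message sectionwise recursion into its Z-side and X-side parts. For each part, I would build a monotone upper-bounding recursion that has exactly the form of \Cref{def:seeded-sc-de} for a fixed-channel potential system. Then I would invoke \Cref{thm:standard-vector-saturation} separately and take $w$ at least the larger of the two required widths. Since \eqref{eq:native-z-checks}--\eqref{eq:native-x-checks} never mix Z-side and X-side messages, the two analyses are independent.

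\emph{Z side.} By the paper's convention, $\epsilon_i=0$ on $\mathcal S$ and $\epsilon_i=\epsilon$ elsewhere, so $\epsilon_i\leq\epsilon$ everywhere. Every update in \eqref{eq:native-z-checks}--\eqref{eq:native-z-variables} is coordinatewise nondecreasing in $(a,b)$ and in $\epsilon_i$. Replacing each $\epsilon_i$ in $\hat b_i$ by $\epsilon$, while keeping the clamp on $\mathcal S$, therefore produces a recursion whose profile dominates the true one at every iteration; this follows by induction on $\ell$ from the common all-one initialization. In the dominating recursion, $\hat b_i=q_\epsilon(h((\mathcal M_-b)_i))$ and $\hat a_i=h((\mathcal M_-a)_i)$, with $q_\epsilon,h$ as in \Cref{prop:reduced-z-system}. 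Because $q_\epsilon$ is affine, $(\mathcal M_+\hat b)_i=q_\epsilon((\mathcal M_+h(\mathcal M_-b))_i)$. Hence the dominating recursion is exactly the seeded coupled recursion of \Cref{def:seeded-sc-de} for the reduced pair $(\widetilde g_Z,\widetilde f_Z)$, a fixed-channel potential system by \Cref{prop:reduced-z-system} (using $j_Z\geq2$ and $\epsilon<1$).

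The nonzero fixed points of the reduced system are exactly the points of $\mathcal F_Z^\star(\epsilon)$ with $c=\epsilon$, and by \eqref{eq:reduced-z-potential-equality} their reduced potential equals $U_Z$. The proof of \Cref{thm:z-constituent-ic-potential} gives $U_Z\geq j_Z/k-\epsilon>0$ on all of them. By \Cref{lem:isolated-success}, the success point is isolated for $\epsilon<j_Z/k$, so the energy gap is strictly positive. \Cref{thm:standard-vector-saturation} then gives a width $w_Z(\epsilon)$, independent of the unseeded length, beyond which $a_i,b_i\to0$ off $\mathcal S$. The assumption $L>|\mathcal S|$ guarantees an unseeded section, and the seed has at least $w$ consecutive sections, as that theorem requires. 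Finally, $r_{Z,i}=\epsilon_i\bigl(1-[1-(\mathcal M_-b)_i]^k\bigr)\to0$.

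\emph{X side.} In \eqref{eq:native-x-checks}, the channel enters only through $e_i^+=\epsilon_i(\cdot)$, which is monotone in $\epsilon_i$, so again replace $\epsilon_i$ by $\epsilon$. The remaining mismatch is orientation: the checks use $\mathcal M_+$ and the variables use $\mathcal M_-$. Reflecting the section indices $i\mapsto-i$ turns this into the $\mathcal M_-$-then-$\mathcal M_+$ order of \Cref{def:seeded-sc-de}, and the reflected seed is still an interval of at least $w$ sections. The maps $(g_X,f_X)$ with $D_X,F_X,G_X$ form a fixed-channel potential system. \Cref{thm:x-constituent-ic-potential} together with \Cref{lem:isolated-success} (applicable since $k>j_X>j_Z\geq2$ gives $k\geq3$) yields a positive gap for $\epsilon<1-j_X/k$. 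Applying \Cref{thm:standard-vector-saturation} gives $w_X(\epsilon)$ and $d_i,e_i\to0$ off $\mathcal S$, hence $r_{X,i}\to0$.

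\emph{Abstract common-window recursion.} The X side is literally \Cref{def:seeded-sc-de}. On the Z side, $c$ is also averaged, but after one iteration $c_i=\epsilon$ off $\mathcal S$ and $c_i=0$ on $\mathcal S$, so every window average satisfies $\bar c\leq\epsilon$. Then $\hat b\leq1-(1-\epsilon)(1-\bar b)^{k-1}=q_\epsilon(h(\bar b))$. I would start the comparison at $\ell=1$ with the dominating $(a,b)$ profile initialized to all ones; this again bounds the recursion by the reduced seeded recursion, and the Z-side argument applies. The step I expect to need the most care is this monotone comparison: checking at every iteration that the clamp, the locally non-averaged factor $\epsilon_i$, and the averaged $c$ are all dominated by an exact instance of \Cref{def:seeded-sc-de}. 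In particular, the affine commutation $\mathcal M_+q_\epsilon=q_\epsilon\mathcal M_+$ is what makes the dominating Z-side recursion exactly, rather than only approximately, of the required form.
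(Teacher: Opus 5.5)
Your proposal is correct and follows the paper's proof step for step. The shared steps are: the Z-side domination obtained by setting the B-check channel factor to $\epsilon$ everywhere (with the affine commutation of $q_\epsilon$ making it exactly the seeded coupling of the reduced system), the index reflection on the X side, and the restart of the comparison after the first update for the abstract recursion. The only difference is your appeal to \Cref{lem:isolated-success}, which is redundant because \Cref{thm:z-constituent-ic-potential,thm:x-constituent-ic-potential} already give explicit positive lower bounds on both energy gaps.
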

\begin{proof}
Theorems \ref{thm:z-constituent-ic-potential} and \ref{thm:x-constituent-ic-potential} give positive constituent energy gaps at the fixed $\epsilon$.  Use \Cref{thm:standard-vector-saturation} with $(\widetilde f_Z,\widetilde g_Z,\widetilde D_Z)$ and $(f_X,g_X,D_X)$; \eqref{eq:reduced-z-potential-equality} identifies the first energy gap with $\Delta E_Z$.  Both comparison systems therefore converge to their two-dimensional origins for sufficiently large $w$.

For the Z-side comparison, retain the same clamp on $(a,b)$ but set the channel factor to $\epsilon$ at every B-check, including checks in $\mathcal S$.  Because $\epsilon_i\leq\epsilon$, its B-check erasure update is no smaller than \eqref{eq:native-z-checks}.  Moreover, averaging commutes with the affine map $q_\epsilon$, so this comparison recursion is exactly the common-window coupling of \eqref{eq:reduced-z-maps}.  Monotonicity bounds the native $(a,b)$ profile by this comparison profile at every iteration.  Hence $a_i,b_i\to0$ and $r_{Z,i}\to0$.

Reflecting section indices converts the native X-side averaging orientation to \Cref{def:seeded-sc-de}.  The reflected seed still has width at least $w$.  Its off-seed recursion is $(f_X,g_X)$ and thus converges to zero, giving $r_{X,i}\to0$.

For the abstract Z-side recursion, after its first update $c_i=\epsilon_i$, and its incoming averaged channel component is at most $\epsilon$.  The same comparison therefore applies from that time, using the largest $(a,b)$ initialization.  Its residual is $\epsilon_i(\mathcal M_+\hat c)_i$ and tends to zero.  The abstract X-side recursion is already in the common-window orientation.  Choose $w$ large enough for both comparison systems.
\end{proof}

This is an achievability statement for every channel strictly below the proved potential threshold.  It does not assert an exact threshold for a fixed finite $(L,w)$, nor does it account for the physical cost of the ideal message seed.

The X/Z equal-rate specialization gives explicit design rates and hashing-bound parameters throughout the verified degree range.

\begin{proposition}[Explicit X/Z equal-rate families]\label{prop:rate-param-xz-equal-rate-saturation}
For integers $j\geq2$ and $\lambda\geq1$ with $2j+\lambda\leq60$, the degree triple
\[
 (j_Z,j_X,k)=(j,j+\lambda,2j+\lambda)
\]
has
\[
 R_Q^{\mathrm{des}}=\frac{\lambda}{2j+\lambda},\qquad
 \epsilon_{\mathrm{pot}}=\epsilon_{\mathrm{hash}}=\frac{j}{2j+\lambda}.
\]
For every $0\leq\epsilon<j/(2j+\lambda)$, its ideally seeded coupled DE converges to zero residual for sufficiently large coupling width.  The example $(4,8,12)$ corresponds to $j=4$, $\lambda=4$, and $R_Q^{\mathrm{des}}=1/3$.
\end{proposition}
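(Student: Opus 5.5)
This is a direct specialization of the earlier results, so the plan is to check hypotheses and then substitute. First I verify that the triple $(j_Z,j_X,k)=(j,j+\lambda,2j+\lambda)$ lies in the range of \Cref{prop:constituent-potential-threshold} and \Cref{thm:joint-saturation}. From $j\geq2$ and $\lambda\geq1$ we get $2\leq j<j+\lambda<2j+\lambda$, since $j\geq2>0$. The hypothesis $2j+\lambda\leq60$ gives $k\leq60$. Hence $2\leq j_Z<j_X<k\leq60$ holds. The X/Z equal-rate condition $j_Z+j_X=2j+\lambda=k$ holds identically.

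Next I compute the rate and threshold. By \Cref{prop:rate}, $R_Q^{\mathrm{des}}=(j_X-j_Z)/k=\lambda/(2j+\lambda)$. Then
\[
\epsilon_{\mathrm{hash}}=\frac{1-R_Q^{\mathrm{des}}}{2}=\frac{2j}{2(2j+\lambda)}=\frac{j}{2j+\lambda}.
\]
\Cref{prop:constituent-potential-threshold}, under $j_Z+j_X=k$, gives $\epsilon_{\mathrm{pot}}=\min\{j_Z/k,1-j_X/k\}=j/(2j+\lambda)=\epsilon_{\mathrm{hash}}$. Both entries of the minimum equal $j/(2j+\lambda)$, because $1-(j+\lambda)/(2j+\lambda)=j/(2j+\lambda)$.

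For convergence, fix $0\leq\epsilon<j/(2j+\lambda)=\min\{j_Z/k,1-j_X/k\}$ and apply \Cref{thm:joint-saturation} directly. It supplies a finite width $w(\epsilon)$ such that, with an ideal message seed of at least $w$ consecutive sections, both visible residuals $r_{Z,i}$ and $r_{X,i}$ tend to zero for every $L>|\mathcal S|$. This is the stated zero-residual convergence.

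Finally, the example $(4,8,12)$ is the case $j=\lambda=4$, with $R_Q^{\mathrm{des}}=4/12=1/3$ and threshold $4/12=1/3$, consistent with \Cref{prop:rate}. There is no real obstacle here. The only point needing care is the strict inequality $j_Z<j_X$ and the degree bound $k\leq60$, because these are what make the exact certificate of \Cref{thm:certified-mn-positivity} applicable through the cited theorems.
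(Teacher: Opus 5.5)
Your proposal is correct and takes the same route as the paper: check that $2\leq j_Z<j_X<k\leq60$ and $j_Z+j_X=k$ hold, then invoke \Cref{prop:rate}, \Cref{prop:constituent-potential-threshold}, and \Cref{thm:joint-saturation}. Your explicit computation of $\epsilon_{\mathrm{hash}}$ and of the two entries of the minimum only writes out the arithmetic the paper leaves implicit.
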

\begin{proof}
The degree triple satisfies $j_Z+j_X=k$ and $2\leq j_Z<j_X<k\leq60$.  \Cref{prop:rate,prop:constituent-potential-threshold} give the rate and potential-threshold identities, and \Cref{thm:joint-saturation} gives convergence below that threshold.
\end{proof}

Every displayed rate representative in \Cref{fig:xz-equal-rate-family-threshold-rate} belongs to this proved finite set.  The proposition supplies explicit rates within the checked range; it does not assert that every rational design rate is represented with $k\leq60$.

\paragraph{Numerical branch checks.}
A scan from a small grid of initial guesses need not locate all nontrivial fixed points.  We instead use the fixed-point equations to parameterize each nontrivial branch.  For the Z side, let $0<u<1$ and define
\begin{align}
 a(u)&=1-(1-u)^{1/(k-1)},\qquad
 v(u)=\left[\frac{a(u)}{u^{j_Z-1}}\right]^{1/k},\notag\\
 b(u)&=u^{j_Z}v(u)^{k-1},\qquad
 E_Z(u)=1-\frac{1-v(u)}{[1-b(u)]^{k-1}}.\label{eq:numeric-z-branch}
\end{align}
Keep only points with $0<v(u)<1$ and $0\leq E_Z(u)<j_Z/k$.  For the X side,
\begin{align}
 d(u)&=u^{k-1},\qquad
 e(u)=1-\left[\frac{1-u}{[1-d(u)]^{j_X-1}}\right]^{1/k},\notag\\
 h_X(u)&=1-[1-d(u)]^{j_X}[1-e(u)]^{k-1},\qquad
 E_X(u)=\frac{e(u)}{h_X(u)^{k-1}}.\label{eq:numeric-x-branch}
\end{align}
Keep $0\leq e(u)<1$ and $0\leq E_X(u)<1-j_X/k$.  These formulas cover every nontrivial fixed point in the stated domains; a finite numerical sampling of $u$ is still not an exact sign certificate.

For example, at $(j_Z,j_X,k)=(2,3,5)$ and $\epsilon=0.2$, the X-side point
\[
 (d,e)\simeq(0.391495092044130,\ 0.108092604846116)
\]
satisfies the fixed-point equations to below $10^{-14}$ in maximum norm.  This example illustrates why failure of a grid search to return a point is not evidence of its absence.  The supplied script checks fixed-point residuals and the total potentials.  These floating-point samples illustrate the branches; the proof of positivity is the independent exact certificate in \Cref{sec:exact-certificate}.

\Cref{fig:xz-equal-rate-family-threshold-rate} displays approximately uniform design-rate representatives.  For $r_i=i/21$, $i=1,\ldots,20$, choose a triple $(j,j+\lambda,2j+\lambda)$ with $j\geq2$, $\lambda\geq1$, and $|\lambda/(2j+\lambda)-r_i|\leq0.02$, minimizing $k=2j+\lambda$ (and then $j$ to break ties).  Both plotted coordinates are computed from the design formulas.  Their location on $R=1-2\epsilon$ is an algebraic identity, rather than a measurement of coupled decoding thresholds.

\begin{figure}[!htbp]
\centering
\includegraphics[width=0.78\textwidth]{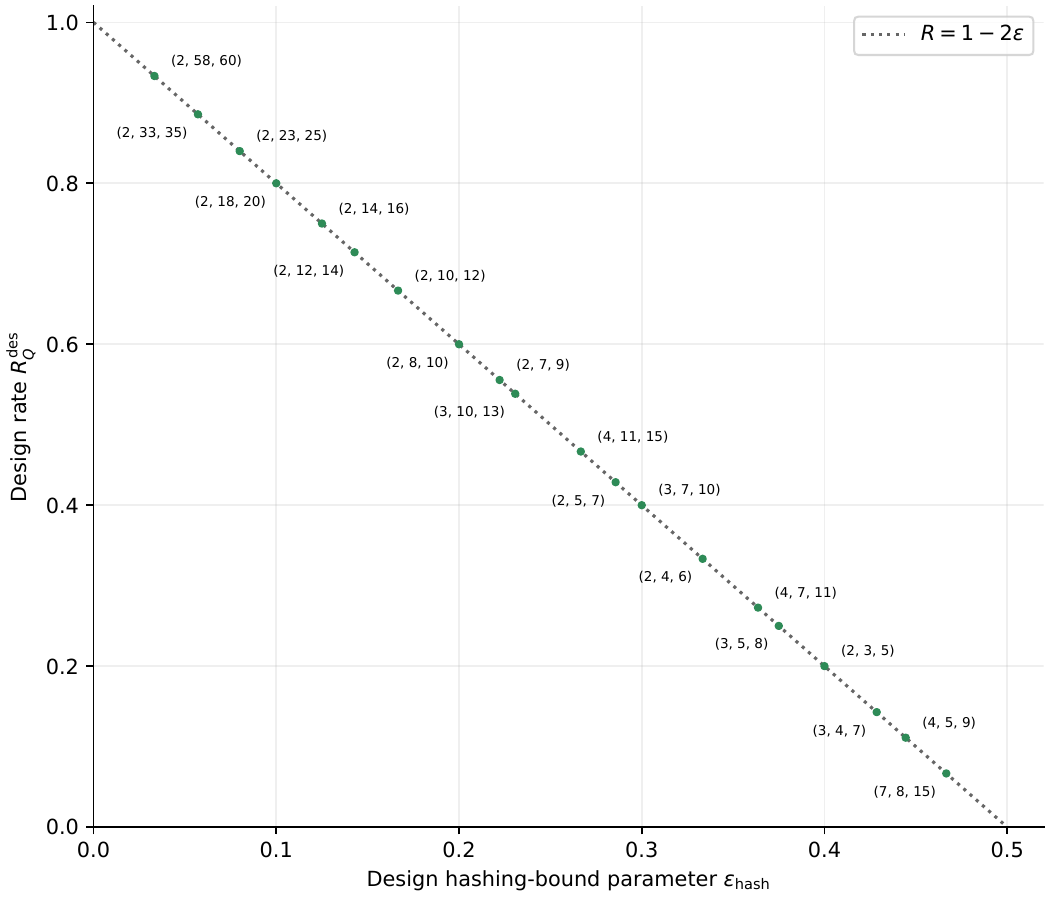}
\caption{Low-degree X/Z equal-rate degree triples used to display approximate
rate coverage in the design-parameter plane.  The representatives are chosen
near \(r_i=i/21\), \(i=1,\ldots,20\), with
\(|R_Q^{\mathrm{des}}-r_i|\leq0.02\) and minimal \(k\), giving \(k\leq60\).
The horizontal axis is
\(\epsilon_{\mathrm{hash}}=j/(2j+\lambda)\), and the vertical
axis is \(R_Q^{\mathrm{des}}=\lambda/(2j+\lambda)\).  The dotted line is
\(R=1-2\epsilon\).  The points are design parameters, not measured BP thresholds.}
\label{fig:xz-equal-rate-family-threshold-rate}
\end{figure}

\Needspace{8\baselineskip}
\paragraph{Numerical DE example.}
For $(j_Z,j_X,k)=(4,8,12)$, the proved potential threshold and the design hashing-bound parameter are both $1/3$.  We compare the native sectionwise recursion with the abstract common-window recursion of \Cref{thm:joint-saturation}, at $L=1024$, $w=16$, $\mathcal S=\{0,\ldots,15\}$, and $\epsilon=0.3325$.  The messages $a,b,d,e$ are zero on the seed and initialized to one elsewhere.  In the native recursion $c_i=\epsilon_i$ from initialization.  For the abstract recursion $c_i$ starts at one off the seed and is reset to $\epsilon_i$ after the first update.  Seed messages remain clamped.

The implementation uses double precision, simultaneous updates, and cumulative sums for window averages.  Residuals are evaluated from the current check messages before the variable update; convergence is declared when the maximum of both residual profiles is below $10^{-12}$.  The abstract recursion met this criterion at iteration 240,239; the native sectionwise recursion met it at iteration 240,239.  Both runs used the same tolerance and initialization.

The abstract Z-side residual is $\epsilon_i(\mathcal M_+\hat c)_i$; the native residual is the local value in \eqref{eq:native-z-variables}.  Figures \ref{fig:seeded-ic-z-wave} and \ref{fig:seeded-ic-x-wave} show the abstract recursion to retain the original numerical comparison.  The curve marked iteration zero is the initial channel-erasure profile, and the remaining curves show every $10000$ iterations together with the final recorded iteration.  The script and machine-readable outputs record both coupling conventions.

\begin{figure}[!htbp]
\centering
\includegraphics[width=0.98\textwidth]{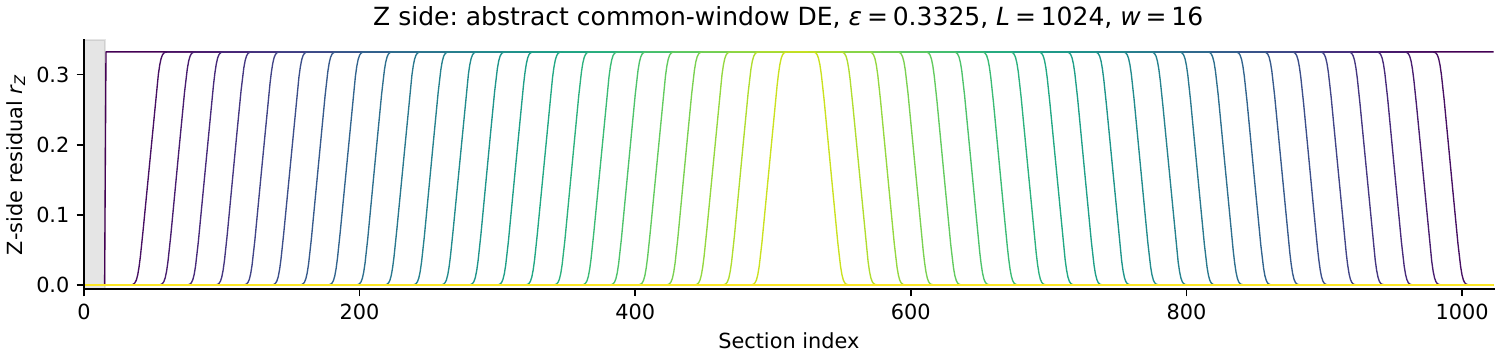}
\caption{Z-side abstract common-window seeded DE for \((j_Z,j_X,k)=(4,8,12)\),
\(L=1024\), \(w=16\), \(|\mathcal S|=w\), and \(\epsilon=0.3325\).  The
vertical axis is the abstract residual $r_{Z,i}=\epsilon_i(\mathcal M_+\hat c)_i$.  The plotted curves start
with the initial profile, followed by every \(10000\) iterations and the final recorded iteration.}
\label{fig:seeded-ic-z-wave}
\end{figure}

\begin{figure}[!htbp]
\centering
\includegraphics[width=0.98\textwidth]{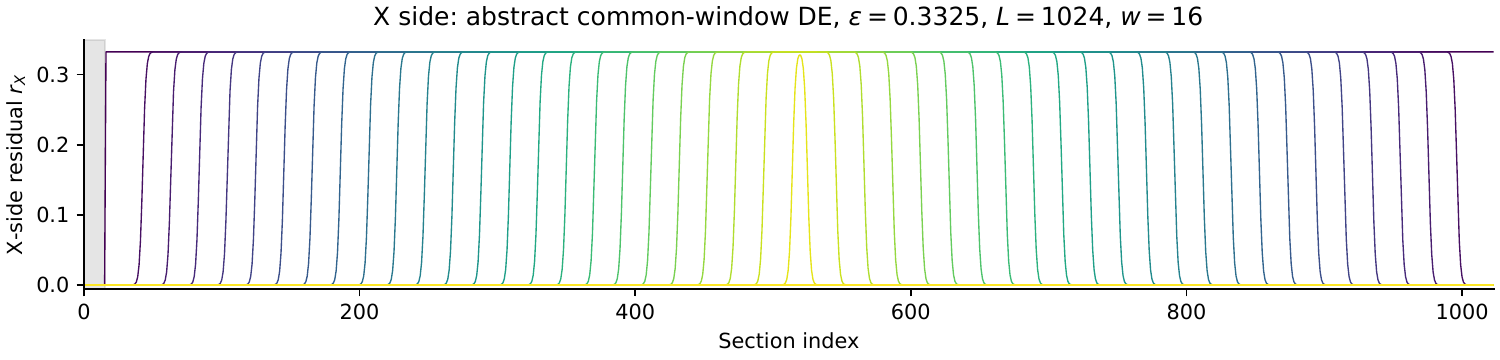}
\caption{X-side abstract common-window seeded DE for the same parameters as
\Cref{fig:seeded-ic-z-wave}.  The vertical axis is the coupled residual \(r_{X,i}=\epsilon_i(\mathcal M_+\hat e)_i^k\).  The curves show the initial profile, every \(10000\) iterations, and the final recorded iteration.}
\label{fig:seeded-ic-x-wave}
\end{figure}

\FloatBarrier

\section{Conclusion and Future Work}\label{sec:conclusion-future}
The sparse nested representation gives a CSS pair with generally dense visible-coordinate check matrices.  Its hard-erasure DE consists of two constituent systems sharing the erasure locations.  For every $2\leq j_Z<j_X<k\leq60$, their potential thresholds are $j_Z/k$ and $1-j_X/k$, as established by the completed exact MN positivity certificate and classical HA--MN duality.  The reduced Z-side representation has successful origin and the same fixed-point potential as the original three-message description.  A fixed-channel shift argument and a monotone comparison then prove ideal-seed convergence below the smaller constituent value, for both the native sectionwise DE and the abstract common-window DE.  Under $j_Z+j_X=k$, the proved potential threshold equals $(1-R_Q^{\mathrm{des}})/2$.

The proof covers all 1711 constituent degree pairs with $2\leq j<k\leq60$, including the main example $(4,8,12)$ and all displayed rate representatives.  The source package includes the full rational certificate and an independent verifier.  Extending the result to unbounded degrees requires additional exact certificates or a uniform analytic positivity argument.  The numerical example at $\epsilon=0.3325$ illustrates the proved convergence for $(4,8,12)$.

The seed clamps the auxiliary messages as well as the visible messages.  Setting only $\epsilon_i=0$ supplies a weaker boundary condition and does not force the punctured messages to zero.  A finite-code implementation must provide the required known auxiliary information while preserving CSS commutation and the syndrome interface.  Its effect on the actual rank, rate, and seed cost must be established.  For comparison with the unseeded design rate, the asymptotic regime should in particular have $w/L\to0$.

The DE residuals are marginal per-coordinate probabilities.  A finite-length logical block-success theorem further requires concentration or a suitable replacement argument, a joint scaling of section size, chain length, width and iteration count, and control of residual ambiguity modulo the CSS stabilizer.  Classical bit-error and block-error distinctions are discussed in \cite{LentmaierTruhachevZigangirovCostello2005}; their quantum analogue remains a finite-length question here.

\bibliographystyle{IEEEtran}
\bibliography{references}
\end{document}